\title{Sequential Defaulting in Financial Networks}
\author{Pál András Papp}{ETH Zürich, Switzerland}{apapp@ethz.ch}{}{}
\author{Roger Wattenhofer}{ETH Zürich, Switzerland}{wattenhofer@ethz.ch}{}{}
\authorrunning{P.A. Papp and R. Wattenhofer}
\keywords{Financial Network, Sequential Defaulting, Credit Default Swap, Clearing Problem, Stabilization Time}
\begin{document}

\maketitle

\begin{abstract}
We consider financial networks, where banks are connected by contracts such as debts or credit default swaps.
We study the clearing problem in these systems: we want to know which banks end up in a default, and what portion of their liabilities can these defaulting banks fulfill. We analyze these networks in a sequential model where banks announce their default one at a time, and the system evolves in a step-by-step manner.

We first consider the \textit{reversible model} of these systems, where banks may return from a default.
We show that the stabilization time in this model can heavily depend on the ordering of announcements. However, we also show that there are systems where for any choice of ordering, the process lasts for an exponential number of steps before an eventual stabilization. We also show that finding the ordering with the smallest (or largest) number of banks ending up in default is an NP-hard problem. Furthermore, we prove that defaulting early can be an advantageous strategy for banks in some cases, and in general, finding the best time for a default announcement is NP-hard. Finally, we discuss how changing some properties of this setting affects the stabilization time of the process, and then use these techniques to devise a \textit{monotone model} of the systems, which ensures that every network stabilizes eventually.
\end{abstract}

\vspace{36pt}

\section{Introduction} \label{sec:Intro}

The world's financial system is a highly complex network where banks and other financial institutions are interconnected by various kinds of contracts. These connections create a strong interdependence between the banks: if one of them goes bankrupt, then this also affects others, causing a cascading effect through the network. Such ripple effects also had an important role in the financial crisis of 2008, and hence there is an increasing interest in the network-based properties of these systems.

One fundamental question in these networks is the so-called \textit{clearing problem}: given a network of banks and contracts, we need to decide which of the banks can fulfill their payment obligations, and which of the banks cannot, and thus have to report a default. This question is of high interest both for financial authorities and for the banks involved.

With two simple kinds of contracts, one can already build a financial network model that captures a wide range of phenomena in real-life financial systems. Previous work has mostly focused on the equilibrium states in these models, i.e. the fixed final states where the recovery rates of banks are consistent with their current assets and liabilities. However, in practice, most events in a financial system happen gradually, one after another: a single bank announces a default, which might prompt another bank to reevaluate its situation, and also need to call being in default. This sequential development is an inherent part of the way financial networks behave, and as such, it is crucial to understand.

In particular, there is a range of natural questions that only arise if we study how the system develops in a step-by-step fashion. Can we reach every equilibrium state in a sequential manner? How does the ordering of default announcements influence the final outcome? Is there an optimal strategy of timing the announcements, either from a financial authority's or a single bank's perspective? How long can the sequential process last, and in particular, is it guaranteed to always stabilize eventually?

In this paper we analyze the development of financial systems in a sequential model, where banks update their situation one after another. We first study the \textit{reversible model}, which is a natural sequential setting in such networks. We analyze this model from three main perspectives:

\begin{itemize}[itemsep=0pt, parsep=5pt, topsep=5pt]
\item \textit{Stabilization time}: We show that a system can easily keep running infinitely in this model. Moreover, the time of stabilization heavily depends on the ordering of default announcements. We also present a more complex system that does stabilize eventually, but only after exponentially many steps.
\item \textit{Globally best solution}: We show that finding the ordering which results in the smallest (or largest) number of defaulting banks in the final state is NP-hard.
\item \textit{Defaulting strategies}: We study the best defaulting strategy of a single bank, and show that surprisingly, a bank may achieve the best outcome by announcing its default as early as possible. We also prove that in general, finding the best time to report a default is NP-hard.
\end{itemize}

Moreover, since the possibly infinite runtime is the most unrealistic aspect of this model, we analyze the reasons behind this phenomenon, and we discuss how it can be avoided in our sequential model.

\begin{itemize}[itemsep=0pt, parsep=5pt, topsep=5pt]
\item \textit{Monotone sequential model}: We show that with two minor changes to the setting (a more sophisticated update rule and a slightly different handling of defaulting banks), we can develop a monotone model variant where the recovery rate of banks can only decrease, and the system is always guaranteed to stabilize after quadratically many steps. We also compare this setting to the reversible model in terms of defaulting strategies.
\end{itemize}

\section{Related Work}

The network-based analysis of financial systems has been rapidly gaining attention in the last decade. Most studies are based on the early financial network model of Eisenberg and Noe \cite{model1}, which only assumes simple debt contracts between the banks. The propagation of shocks has been analyzed in many variants of this base model over the last decade \cite{prop2, prop3, prop4, prop5, prop6, coveredCDS}; in particular, the model has been extended by default costs \cite{veraart}, cross-ownership relations \cite{cross1, cross2} or game-theoretic aspects \cite{gametheo}.

However, the common ground in these model variants is that they only describe \textit{long positions} between banks: a better outcome for one bank always means a better (or the same) outcome for other banks. This already allows us to capture how the default of a single bank can cause a ripple effect in the system, but it also ensures that there is always an equilibrium which is simultaneously best for all banks \cite{model1, veraart}. As such, long positions cannot represent e.g. the opposing interests of banks in real-world systems. In particular, banks in practice often have \textit{short positions} on each other when a worse situation for one bank is more favorable to another bank, mostly due to various kinds of financial derivatives.

The recent work of Schuldenzucker et. al. \cite{base1} presents a more refined model where the network also contains credit default swaps (CDSs) besides regular debt contracts. CDSs are financial derivatives that essentially allow banks to bet on the default of another bank in the system; they have played a dominant role in the financial crisis of 2008 \cite{CDS3}, and have been thoroughly studied in the financial literature \cite{CDS1, CDS2}. While CDSs are still a rather simple kind of derivative, they already allow us to model short positions in the network; as such, their introduction to the system leads to remarkably richer behavior. In our paper, we also assume these two kinds of contracts in the network.

The work of \cite{base1, base2} discusses various properties of this new model: they show that systems may have multiple solutions (equilibrium states) in this model, and finding a solution is PPAD-complete. They also show that with default costs, these systems might not have a solution at all, and deciding whether a solution exists becomes NP-hard. The work of \cite{ec} studies a range of objective functions for selecting the best solution in this model, showing that the best equilibrium is not efficiently approximable to a $n^c$ factor for any $c<\frac{1}{4}$. The work of \cite{icalp} analyzes the model from a game-theoretical perspective, discussing how the removal or modification of contracts can lead to more favorable equilibria for the acting banks, and showing that such operations can lead to game-theoretical dilemmas.

However, all these results only analyze the model in terms of equilibrium states. This is indeed important when the market is hit by a large shock, and a central authority has to analyze the whole system, identify its equilibria, and possibly select one of them to artificially implement. However, apart from these rare occasions, the network mostly evolves sequentially, with banks announcing defaults in a step-by-step manner.
For an understanding of real-world networks, it is also essential to study this gradually developing behavior of the process besides the equilibrial outcomes.

Sequential models of financial networks have already been studied in several papers; however, most of them consider some variant of the debt-only model with long positions \cite{seqclear1, prop1}. The paper of \cite{base1} notes that sequential clearing in their model would be dependent on the order of defaults, but does not investigate this direction any further. At the other end of the scale, the work of \cite{banerjee} introduces a very general sequential model (where payment obligations can be a function of all banks and all previous time steps), with a specific focus on expressing concrete real-world examples in this setting. As such, to our knowledge, there is no survey that considers a simple network model with both long and short positions, and analyzes the step-by-step development of financial systems in this model.

Finally, we point out that the clearing problem indeed has a high relevance in practice, e.g. when financial authorities conduct stress tests to analyze the sensitivity of real-world networks. One concrete example for a study of this problem is the European Central Bank's stress test framework \cite{ECB}.

\section{Model Definition} \label{sec:finance}

\subsection{Banks and contracts}

Our financial system model consists of a set of \textit{banks} (or \textit{nodes}) $B$. We denote individual banks by $u$, $v$ or $w$, and the number of banks by $n=|B|$. Banks are connected by two kinds of contracts that both describe a specific payment obligation from a debtor bank $u$ to a creditor bank $v$. The amount of payment obligation is called the \textit{weight} of the contract.

The simpler kind of connection is a \textit{simple debt} contract, which obliges the debtor $u$ to pay a specific amount $\delta$ to the creditor $v$. This liability is unconditional, i.e. $u$ owes this amount to $v$ in any case.

Besides debts, banks can also enter into \textit{conditional debt contracts} where the payment obligation depends on some external event in the system. One of the most frequent forms of such a conditional debt is a credit default swap (\textit{CDS}), which obliges $u$ to pay a specific amount to $v$ in case a specific third bank $w$ (the \textit{reference entity}) is in default. More specifically, if $w$ can only fulfill a $r_w$ portion of its payment obligations (known as the recovery rate of $w$), then a CDS of weight $\delta$ implies a payment obligation of $\delta \cdot (1-r_w)$ from $u$ to $v$. For simplicity, we assume that all conditional debt contracts are CDSs.

In practice, CDS contracts can, for example, be used by a bank as an insurance policy against the default of its debtors. If $v$ suspects that its debtor $w$ might not be able to fulfill its payment obligation, then $v$ can enter into a CDS contract (as creditor) in reference to $w$; if $w$ goes into default and is indeed unable to pay, then $v$ receives some payment on this CDS instead. However, banks may also enter into CDSs for other reasons, e.g. speculative bets about future developments in the market. As a sanity assumption, we assume that no bank can enter into a contract with itself or in reference to itself.

Besides the contracts between banks, a financial system is described by the amount of funds (in financial terms: \textit{external assets}) owned by each bank, denoted by $e_v$ for a specific bank $v$. The external assets and the incoming payments describe the total amount of assets available to $v$, while the outgoing contracts describe the total amount of payment obligations of $v$. If $v$ is not able to fulfill all these obligations from its assets, then we say that $v$ is \textit{in default}. If $v$ is in default, then the fraction of liabilities that $v$ is able to pay is the \textit{recovery rate} of $v$, denoted by $r_v$. Note that $r_v \in [0,1]$, and $v$ is in default if $r_v<1$. We represent the recovery rates of all banks in a recovery rate vector $r \in [0,1]^B$.

For an example, consider the financial system in Figure \ref{fig:example1} with $3$ banks. The banks have external assets of $e_u=2$, $e_v=1$ and $e_w=0$. Bank $u$ has a debt of weight $2$ towards both $v$ and $w$, and there is a CDS of weight $2$ from $w$ to $v$, with $u$ as the reference entity.
In this network, $u$ has a total payment obligation of $4$, but only has assets of $2$, so $u$ is in default, with a recovery rate of $r_u=\frac{2}{4}=\frac{1}{2}$. Bank $u$ must use its funds of $2$ to pay $1$ unit of money to both $w$ and $v$, proportionally to its obligations. 
Since $r_u=\frac{1}{2}$, the CDS from $w$ to $v$ will induce a payment obligation of $2 \cdot (1-r_u)=1$. The payment of $1$ coming from $u$ allows $w$ to fulfill this obligation to $v$, thus narrowly avoiding default (hence $r_w=1$). Finally, $v$ receives $1$ unit from both $u$ and $w$, has funds of $1$ itself, and no payment obligations, so it has a positive equity of $3$, and $r_v=1$.

For convenience, we will use a simplified version of this notation in our figures: we only show the weight $\delta$ of a contract when $\delta \neq 1$, and we only show the external assets of $v$ explicitly if $e_v \neq 0$. We also write $e_v=\infty$ to conveniently indicate that $v$ can pay its liabilities in any case.

We also note that many of our constructions in the paper contain banks that have the exact same amount of assets and liabilities, like $w$ in this example. This is a somewhat artificial `edge case' that still ensures $r_w=1$. However, this is only for the sake of simplicity; we could avoid these edge cases by providing more assets to the banks in question.

Finally, we point out that contracts in a real-world financial system are often results of an earlier transaction between the banks, i.e. the creditor $v$ previously offering a loan to the debtor $u$. We assume that such earlier payments are implicitly represented in $e_u$, and as such, the external assets and the contracts are together sufficient to describe the current state of the system.

\begin{figure}
\centering
\hspace{0.08\textwidth}
\begin{subfigure}[b]{0.24\textwidth}

\begin{tikzpicture}

	\draw[very thick, blue, arrows=-latex] (0pt,0pt) -- (72pt,0pt);
	\draw[very thick, blue, arrows=-latex] (0pt,0pt) -- (34pt,44pt);
	\draw[very thick, brown, arrows=-latex] (40pt,50pt) -- (74pt,6pt);
	
	\node[anchor=center] at (40pt,6pt) {\small $2$};
	\node[anchor=center] at (14pt,27pt) {\small $2$};
	\node[anchor=center, rotate=-51.5] at (66pt,28pt) {\footnotesize $2\! \cdot\! (1\!-\!r{_{\!}}$\small$_v$\footnotesize$)$};
	
	\draw[black, fill=white] (0pt,0pt) circle (8pt);
	\draw[black, fill=white] (80pt,0pt) circle (8pt);
	\draw[black, fill=white] (40pt,50pt) circle (8pt);
	
	\node[anchor=center] at (0pt,0pt) {\large $u$};
	\node[anchor=center] at (80pt,0pt) {\large $v$};
	\node[anchor=center] at (40pt,50pt) {\large $w$};
	
	\draw [fill=white] (4.5pt,-2pt) rectangle (10.5pt,-11pt);
	\node[anchor=center] at (7.5pt,-6.5pt) {\footnotesize $2$};
	\draw [fill=white] (44.5pt,48pt) rectangle (50.5pt,39pt);
	\node[anchor=center] at (47.5pt,43.5pt) {\footnotesize $0$};
	\draw [fill=white] (84.5pt,-2pt) rectangle (90.5pt,-11pt);
	\node[anchor=center] at (87.5pt,-6.5pt) {\footnotesize $1$};
	
\end{tikzpicture}
	\vspace{-18pt}
	\captionsetup{justification=centering}
	\caption{}
	\vspace{-3pt}
	\label{fig:example1}
\end{subfigure}
\hspace{0.22\textwidth}
\begin{subfigure}[b]{0.24\textwidth}

\begin{tikzpicture}

	\draw[very thick, brown, arrows=-latex] (0pt,0pt) -- (72pt,0pt);
	\draw[very thick, blue, arrows=-latex] (80pt,0pt) -- (46pt,44pt);
	\draw[very thick, blue, arrows=-latex] (40pt,50pt) -- (6pt,6pt);
	
	\node[anchor=center] at (14pt,27pt) {\small $2$};
	\node[anchor=center] at (66pt,27pt) {\small $3$};
	\node[anchor=center] at (40pt,7pt) {\small $1\!-\!r_w$};
	
	\draw[black, fill=white] (0pt,0pt) circle (8pt);
	\draw[black, fill=white] (80pt,0pt) circle (8pt);
	\draw[black, fill=white] (40pt,50pt) circle (8pt);
	
	\node[anchor=center] at (0pt,0pt) {\large $u$};
	\node[anchor=center] at (80pt,0pt) {\large $v$};
	\node[anchor=center] at (40pt,50pt) {\large $w$};
	
	\draw [fill=white] (3pt,-3pt) rectangle (12pt,-10pt);
	\node[anchor=center] at (7.5pt,-6.5pt) {\scriptsize $\infty$};
	\draw [fill=white] (84.5pt,-2pt) rectangle (90.5pt,-11pt);
	\node[anchor=center] at (87.5pt,-6.5pt) {\footnotesize $1$};
	
\end{tikzpicture}
	\vspace{-18pt}
	\captionsetup{justification=centering}
	\caption{}
	\vspace{-3pt}
	\label{fig:example2}
\end{subfigure}
\hspace{0.1\textwidth}
\caption{Two example systems, with external assets shown in rectangles besides the banks. Simple debts are denoted by blue arrows from debtor to creditor, while CDSs are denoted by light brown arrows from debtor to creditor, with the payment obligation shown beside the arrow.}
\end{figure}
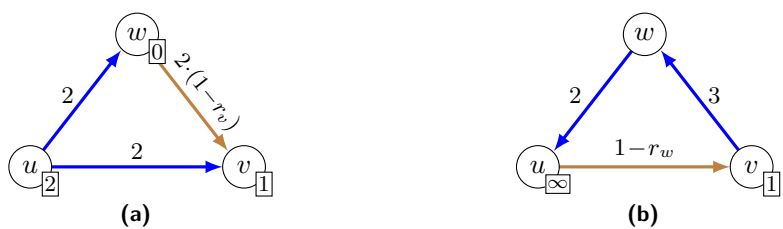

\subsection{Assets, liabilities and equilibria}
\label{sec:formalmodel}

We now formally define the liabilities and assets of banks in our systems. 
Note that due to the conditional debts, the payment obligations in a network are always a function of the recovery rate vector $r$.

Assuming a specific vector $r$, the liability $l_{u,v}$ of a bank $u$ towards a bank $v$ is defined as the sum of payment obligation from $u$ to $v$ on all contracts, i.e.
\[ l_{u,v}(r) = \delta_{u,v} +\sum_{w \in V} \delta_{u,v}^w \cdot (1-r_w), \]
where $\delta_{u,v}$ is the weight of the simple debt contract from $u$ to $v$ (if this contract exists, and $0$ otherwise), and $\delta_{u,v}^w$ is the weight of the CDS from $u$ to $v$ in reference to $w$ (if it exists, and $0$ otherwise). The total \textit{liability} of $u$ is simply the sum of liabilities to all other banks: $l_u(r) = \sum_{v \in V} l_{u,v}(r)$.

However, the actual \textit{payment} $p_{u,v}$ from $u$ to $v$ can be less than $l_{u,v}$ if $u$ is in default. If $u$ is in default, then it has to spend all of its assets to make payments to creditors. Most financial system models assume that in this case, $u$ has to follow the \textit{principle of proportionality}, i.e. it has to make payments proportionally to the corresponding liabilities. This means that if $u$ can pay an $r_u$ portion of its total liabilities, and it has a liability of $l_{u,v}$ towards $v$, then the payment from $u$ to $v$ is $p_{u,v}(r) = r_u \cdot l_{u,v}(r)$.

On the other hand, we can define the \textit{assets} of a bank $v$ as the sum of $v$'s external assets and its incoming payments in the network; that is,
\[ a_v(r) = e_v + \sum_{u \in V} p_{u,v}(r). \]
If $v$ is in default, then all these assets are used for $v$'s payment obligations; otherwise, $a_v-l_v$ of these assets remain at $v$. Note that while both $a_v(r)$ and $l_v(r)$ are formally a function of $r$, we often simplify this notation to $a_v$ and $l_v$ when the recovery rate is clear from the context.

Recall that the recovery rate of $v$ indicates the portion of payment obligations that $v$ is able to fulfill. As such, a valid choice of $r_v$ requires $r_v=1$ if we have $a_v \geq l_v$, and $r_v=\frac{a_v}{l_v}$ if $a_v < l_v$. For simplicity, let us introduce a separate function $R$ to denote this dependence on $a_v$ and $l_v$; that is, we define the function $R\,:\,[0,\infty) \times [0,\infty) \, \rightarrow \, [0,1]$ as
\[ R(a, l)=
\begin{cases}
    \: \, 1, & \text{if } a \geq l \\
		\: \, \frac{a}{l}, & \text{otherwise.}
\end{cases}
 \]

We say that a vector $r \in [0,1]^B$ is an \textit{equilibrium} (or a \textit{clearing vector}) of the system if for each bank $v \in B$, we have $r_v=R(_{\,} a_v(r), _{\,} l_v(r) _{\,} )$; that is, if the recovery rate vector is consistent with the assets and liabilities it generates in the network. Previous work has mostly focused on the analysis of different equilibrium states. Recall that while it is mostly straightforward to find the equilibrium states in our example constructions, the problem is PPAD-hard in general \cite{base2}.

We have already seen a simple example equilibrium in Figure \ref{fig:example1}; for another example that is slightly more challenging to compute, let us consider Figure \ref{fig:example2}. Here bank $u$ is again always able to pay its liabilities, so $r_u=1$ in any case. Furthermore, neither $r_v=1$ nor $r_w=1$ can provide an equilibrium in this network, so both $v$ and $w$ must be in default in any solution. Thus any equilibrium must have 
\[ r_v=\frac{a_v}{l_v}=\frac{1+1-r_w}{3} \qquad \text{and} \qquad r_w=\frac{a_w}{l_w}=\frac{3 \cdot r_v}{2} \, . \]
This implies that the only equilibrium is $r_v=\frac{4}{9}$, $r_w=\frac{2}{3}$.

\subsection{Sequential models of defaulting}

We have defined the equilibria of the system as the states $r$ that would fulfill the payment criteria if every bank were to simultaneously update its recovery rate to $r$. However, in practice, the announcement of defaults usually happens in a sequential manner, due to different sources of delay in the system: even if it is clear from $a_v$ and $l_v$ that a bank $v$ is only able to fulfill a specific $r_v$ portion of its liabilities, this might not be immediately known to the creditors of $v$ (due to incomplete information), or the legal framework may first allow $v$ to try to obtain further funds before officially having to announce its default. As such, the officially announced recovery rate $r_v$ might not always equal $R(a_v, l_v)$, and $v$ has to explicitly announce the changes in $r_v$ in order to make other banks aware of this situation.

Hence in our sequential model, each step of the process will consist of a single bank announcing an \textit{update} to its recovery rate. That is, given the assets $a_v$ and liabilities $l_v$ currently available to $v$, if the official recovery rate $r_v$ does not equal $R(a_v, l_v)$, then bank $v$ can (and eventually has to) announce a new official recovery rate of $r_v:=R(a_v, l_v)$. Since this affects both the payments received by the debtors of $v$ and the payment obligations on CDSs in reference to $v$, it can have various effects on the system, providing new assets and liabilities to some banks; as a result, these banks may also end up with a higher or lower asset/liability balance than their currently announced recovery rate, and thus they will also have to execute a new update at some point.

More formally, we consider discrete time steps $t=0,1,2,...$ . Each step consists of a single bank $v$ announcing an update to $r_v$. That is, if $v$ has assets $a_v\,\!\!^{(t-1)}$ and liabilities $l_v\,\!\!^{(t-1)}$, but a recovery rate of $r_v\,\!\!^{(t-1)} \neq R \left( a_v\,\!\!^{(t-1)}, l_v\,\!\!^{(t-1)} \right)$ at time $t-1$, then we say that $v$ is \textit{updatable} at time $t-1$. In each time step $t$, we select a bank $v$ that is updatable at time $t-1$, and define the state of the system at time $t$ by (i) setting $r_v\,\!\!^{(t)} = R \left( a_v\,\!\!^{(t-1)}, l_v\,\!\!^{(t-1)} \right)$ for the bank $v$ that executes the update, (ii) setting $r_u\,\!\!^{(t)} = r_u\,\!\!^{(t-1)}$ for every other bank $u \neq v$, and (iii)
calculating $a_u\,\!\!^{(t)}$ and $l_u\,\!\!^{(t)}$ for all $u \in B$ based on this new vector $r\,\!\!^{(t)}$.

We assume that initially, each bank $v$ has $r_v\,\!\!^{(0)}=1$, and we compute $a_v\,\!\!^{(0)}$ and $l_v\,\!\!^{(0)}$ accordingly. We say that the sequential process \textit{stabilizes} in round $t$ if there is no updatable bank in round $t$.

\section{Basic Properties} \label{sec:basics}

We begin by discussing some fundamental properties of this sequential setting.

\subsection{Reversibility and infinite cycling}

One important property of the sequential model is that even if a bank $v$ goes into default, it can easily return from this default later. That is, future updates in the system might increase the payment obligation on an incoming CDS of $v$, thus increasing $a_v$ and possibly raising $\frac{a_v}{l_v}$ above 1 again. This is in line with real-world financial systems, where returning from a default is also often possible if a bank acquires new assets. Due to this property, we also refer to this setting as the \textit{reversible model}.

Note that in practice, defaulting banks are often given a limited amount of time to obtain new assets and thus reverse a default; however, our sequential setting does not define an explicit timing of defaults (only their order), so such rules are not straightforward to include in our model. Nonetheless, we point out that many of our example constructions also work if we assume that defaults are only reversible for a specific (constant) number of rounds.

Another important property is that in a cyclic network topology, our model can easily result in an infinite loop of updates. Consider the example in Figure \ref{fig:infinite}, where the default of $v$ indirectly provides new assets to $v$. Since $r_v=1$ initially, $u$ must first update to $r_u=0$, and as a result, $v$ must update to $r_v=0$. However, this leads to new liabilities in the network, providing assets to both $u$ and (indirectly) to $v$, so $u$ (and then $v$) must update its rate back to $r_u=r_v=1$. This returns the system to its initial state, where $u$ (and $v$) will continue by updating their recovery rates to $0$ again.

If we keep repeating these few steps, then $u$ and $v$ alternate between $r_u=r_v=0$ and $r_u=r_v=1$ endlessly. Note that the system does have an equilibrium in $r_u=r_v=\frac{1}{2}$; however, instead of converging to this state, the banks keep on periodically repeating the same few steps. The possibility of such behavior in a sequential setting has already been noted in \cite{base1} or \cite{banerjee} before. While this looping behavior is certainly undesired, it follows straightforwardly from the reversibility of defaults and the existence of cycles in the network topology. As such, these situation could also occur in real-world systems, requiring a financial authority to intervene and set the system artificially to its equilibrium.

\subsection{Dependence on the order of updates}

Another key property of the sequential model is that the final outcome becomes dependent on the ordering of updates, i.e. whether some banks announce their default earlier or later.

We show a simple example of this dependence on the \textit{branching gadget} of Figure \ref{fig:branch}, which has already been used as a building block in the works of \cite{base2} and \cite{ec}. In this system, neither of the two banks $u$ and $v$ have any assets initially, so they are unable to fulfill their obligations. However, if $u$ is the first one to report default (updating to a new recovery rate of $a_v\,\!\!^{(0)}\!/\,l_v\,\!\!^{(0)}=0$), then this provides $1$ unit of new assets to $v$, which means that $v$ does not default anymore; the system stabilizes with $r_u=0$, $r_v=1$. Similarly, if $v$ is the first one to execute an update, then this provides new assets to $u$, and the system stabilizes with $r_u=1$, $r_v=0$. Thus both banks are strongly motivated to delay their default announcement as long as possible, as this might allow them to avoid defaulting entirely.

We can also note that there are further equilibrium states where both $u$ and $v$ are in default, e.g. when $r_u=\frac{1}{2}$ and $r_v=\frac{1}{2}$; due to its symmetry, one might even argue that this is the `fair' equilibrium to implement. However, this equilibrium is not reachable in any way through sequential updates; the only possible endstates of the sequential model are $(r_u, r_v)=(0,1)$ and $(r_u, r_v)=(1,0)$ as described above.

This shows that even in terms of the final outcome, the sequential model can significantly differ from the static analysis of the system. This is not due to the presence of fractional recovery rates: we can also easily have equilibria with integer (i.e., 0 or 1) recovery rates that is not reachable in a sequential setting. In Figure \ref{fig:unreach}, bank $u$ is the only node who can execute an update, which immediately leads to the unique final state $r_u=0$, $r_v=r_w=1$. However, $r_u=1$ with $r_v=r_w=0$ also forms an equilibrium in this system, so this phenomenon is indeed a result of the sequential nature of our model.

\begin{figure}
\hspace{0.01\textwidth}
\centering
\minipage{0.27\textwidth}
\centering
	\vspace{18pt}

\begin{tikzpicture}
	
	\draw[very thick, brown, arrows=-latex] (0pt,0pt) -- (71pt,0pt);
	\draw[very thick, blue, arrows=-latex] (80pt,0pt) -- (46pt,34pt);
	\draw[very thick, blue, arrows=-latex] (40pt,40pt) -- (6pt,6pt);
	
	\node[anchor=north] at (40pt,0pt) {\small $1\!-\!r_{_{\!}}$\small$_v$};
	
	\draw[black, fill=white] (0pt,0pt) circle (8.5pt);
	\draw[black, fill=white] (80pt,0pt) circle (8.5pt);
	\draw[black, fill=white] (40pt,40pt) circle (8.5pt);
	
	\node[anchor=center] at (80pt,0pt) {\large $u$};
	\node[anchor=center] at (40pt,40pt) {\large $v$};
	
	\draw [fill=white] (3pt,-3pt) rectangle (12pt,-10pt);
	\node[anchor=center] at (7.5pt,-6.5pt) {\scriptsize $\infty$};
	
\end{tikzpicture}
	\vspace{6pt}
	\caption{Example system for an infinite loop in the reversible model.}
	\label{fig:infinite}
\endminipage\hfill
\hspace{0.03\textwidth}
\minipage{0.32\textwidth}
\centering
	\vspace{16pt}

\begin{tikzpicture}
	
	\draw[very thick, blue, arrows=-latex] (50pt,25pt) -- (93pt,3.5pt);
	\draw[very thick, blue, arrows=-latex] (50pt,-25pt) -- (93pt,-3.5pt);
	\draw[very thick, brown, arrows=-latex] (0pt,0pt) -- (42pt,23pt);
	\draw[very thick, brown, arrows=-latex] (0pt,0pt) -- (42pt,-23pt);
	
	\node[anchor=center, rotate=29] at (22pt,19pt) {\footnotesize $1\!-\!r_{_{\!}}$\small$_v$};
	\node[anchor=center, rotate=-26] at (23pt,-20pt) {\footnotesize $1\!-\!r_{_{\!}}$\small$_u$};
	
	\draw[black, fill=white] (0pt,0pt) circle (8.5pt);
	\draw[black, fill=white] (50pt,25pt) circle (8.5pt);
	\draw[black, fill=white] (50pt,-25pt) circle (8.5pt);
	\draw[black, fill=white] (100pt,0pt) circle (8.5pt);
	
	\node[anchor=center] at (50pt,25pt) {\large $u$};
	\node[anchor=center] at (50pt,-25pt) {\large $v$};
	
	\draw [fill=white] (3pt,-3pt) rectangle (12pt,-10pt);
	\node[anchor=center] at (7.5pt,-6.5pt) {\scriptsize $\infty$};
	
\end{tikzpicture}
	\vspace{5pt}
	\caption{Example system where the outcome depends on the order of announcements.}
	\label{fig:branch}
\endminipage\hfill
\hspace{0.03\textwidth}
\minipage{0.31\textwidth}
\centering

\begin{tikzpicture}
	
	\draw[very thick, blue, arrows=-latex] (50pt,35pt) -- (94pt,5pt);
	\draw[very thick, brown, arrows=-latex] (50pt,-35pt) -- (94pt,-5pt);
	\draw[very thick, brown, arrows=-latex] (50pt,0pt) -- (92pt,0pt);
	\draw[very thick, brown, arrows=-latex] (0pt,35pt) -- (42pt,35pt);
	
	\node[anchor=center] at (23pt,41pt) {\footnotesize $1\!-\!r_{_{\!}}$\small$_v$};
	\node[anchor=center, rotate=35] at (76pt,-25pt) {\footnotesize $1\!-\!r_{_{\!}}$\small$_v$};
	\node[anchor=center] at (73pt,6pt) {\footnotesize $1\!-\!r_{_{\!}}$\small$_w$};
	
	\draw[black, fill=white] (0pt,35pt) circle (8.5pt);
	\draw[black, fill=white] (50pt,35pt) circle (8.5pt);
	\draw[black, fill=white] (50pt,0pt) circle (8.5pt);
	\draw[black, fill=white] (50pt,-35pt) circle (8.5pt);
	\draw[black, fill=white] (100pt,0pt) circle (8.5pt);
	
	\node[anchor=center] at (50pt,35pt) {\large $u$};
	\node[anchor=center] at (50pt,0pt) {\large $v$};
	\node[anchor=center] at (50pt,-35pt) {\large $w$};
	
	\draw [fill=white] (3pt,32pt) rectangle (12pt,25pt);
	\node[anchor=center] at (7.5pt,28.5pt) {\scriptsize $\infty$};
	
\end{tikzpicture}
	\caption{Example of an equilibrium that is not reachable in the sequential model.}
	\label{fig:unreach}
\endminipage\hfill
\hspace{0.01\textwidth}
\end{figure}

\section{Results} \label{sec:reverse}

We now move on to a deeper analysis of the model. We mainly focus on the length and outcome of the sequential process, and how the ordering of updates affects these properties.

Since our proofs will require more complex constructions, we switch to a simpler notation in our figures: instead of directly showing the liability $\delta \cdot (1-r_w)$ on a CDS, we only label the CDS by the weight $\delta$ and the reference entity $w$, or simply by $w$ when $\delta=1$. Nonetheless, recall that each such CDS still denotes a liability of $\delta \cdot (1-r_w)$.

\subsection{Stabilization time}

One fundamental question is the number of rounds it takes until the sequential process stabilizes, i.e. until no node can execute an update anymore. We first analyze this aspect in detail.

We have already seen in Figure \ref{fig:infinite} that even in simple examples, it can easily happen that the system does not stabilize at all.

\begin{corollary}
There is a system which never stabilizes.
\end{corollary}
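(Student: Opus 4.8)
The plan is to make the informal argument that precedes the statement fully precise on the concrete network of Figure~\ref{fig:infinite}, since the corollary is meant to follow directly from that discussion. That system has three banks: the bank with $\infty$ external assets (call it $s$), which trivially always satisfies $r_s=1$, together with $u$ and $v$. The only contracts are the CDS from $s$ to $u$ in reference to $v$ (of weight~$1$), the debt from $u$ to $v$, and the debt from $v$ to $s$. The whole proof reduces to exhibiting a run of the sequential process that is periodic and has an updatable bank at \emph{every} step, so that stabilization never occurs.

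First I would fix the initial state $r_u^{(0)}=r_v^{(0)}=1$ and evaluate the $a$ and $l$ values from their definitions. Since $r_v=1$, the CDS carries liability $1-r_v=0$, so $a_u=0<l_u=1$ and $R(a_u,l_u)=0\neq r_u$; thus $u$ is updatable, while a quick check shows $v$ and $s$ are not. I would then carry out the forced four-step update sequence $u\!\to\!0,\ v\!\to\!0,\ u\!\to\!1,\ v\!\to\!1$, recomputing the assets and liabilities after each step. The two key transitions to record are: once $v$ drops to $r_v=0$, the CDS liability becomes $1-r_v=1$, so $s$ (having infinite assets) pays $1$ to $u$ and forces $u$ back up to $r_u=1$; and once $u$ returns to $1$ it again pays its full debt, forcing $v$ back up to $r_v=1$. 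At each of these four states exactly one bank is updatable, so in fact the run is not merely \emph{possible} but deterministic.

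The crucial observation is that after these four updates every recovery rate equals its value at time $0$, so the configuration is literally identical to the initial one; hence the same four updates become available again and repeat forever. The only point that genuinely needs checking — and the sole (minor) obstacle — is confirming that this returned state is not an equilibrium, so that the cycle does not secretly pass through a stable configuration: as computed above, at $r_u=r_v=1$ bank $u$ is updatable, so no state visited in the loop is stable. This produces an infinite run and establishes that the system never stabilizes, completing the corollary.
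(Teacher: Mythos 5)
Your proposal is correct and follows exactly the paper's own argument: the paper derives this corollary from the preceding discussion of Figure~\ref{fig:infinite}, where the same four-step cycle $u\!\to\!0,\ v\!\to\!0,\ u\!\to\!1,\ v\!\to\!1$ returns the system to its initial state and thus repeats forever. Your write-up merely makes the asset/liability computations and the determinism of the run explicit, which the paper leaves informal.
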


Furthermore, with the appropriate ordering of default announcements, we can also obtain any finite value as a stabilization time.

\begin{lemma} \label{lem:stop}
For any integer $k$, there exists a system and an ordering such that the system stabilizes after exactly $k$ steps.
\end{lemma}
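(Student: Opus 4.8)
The plan is to construct, for each $k$, a simple \emph{debt chain} in which defaults cascade along a directed path one bank at a time, so that the natural ordering produces exactly $k$ updates. Concretely, I would take banks $v_1, v_2, \dots, v_k$ together with one sink bank $s$, place a simple debt of weight $1$ from $v_i$ to $v_{i+1}$ for each $1 \le i \le k-1$, and a debt of weight $1$ from $v_k$ to $s$. I set $e_{v_i} = 0$ for all $i$ and give $s$ no outgoing contracts (so $l_s = 0$). The idea is that each $v_i$ owes exactly $1$ and, as long as its predecessor pays in full, receives exactly $1$, sitting precisely at the boundary $a_{v_i} = l_{v_i}$; once the predecessor defaults, this incoming payment vanishes and $v_i$ is forced to default as well.

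First I would check the initial state. At $t=0$ every $r_v = 1$, so $v_1$ has $a_{v_1} = 0 < 1 = l_{v_1}$ and is updatable, while each $v_i$ with $i \ge 2$ receives a full payment of $1$ from $v_{i-1}$ and thus has $a_{v_i} = 1 = l_{v_i}$, giving $R(a_{v_i}, l_{v_i}) = 1 = r_{v_i}$; the sink has $l_s = 0$ and so $R = 1$ as well. Hence $v_1$ is the unique updatable bank. This uses the equal-assets-and-liabilities edge case flagged earlier in the paper, where $R = 1$; if one prefers to avoid it, a negligible amount of extra external assets at each $v_i$ restores a strict inequality without changing the dynamics.

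Next I would establish the cascade invariant by induction: after $v_i$ updates its rate to $R(0,1) = 0$, the payment $p_{v_i, v_{i+1}}$ drops from $1$ to $0$, so $v_{i+1}$ now has $a_{v_{i+1}} = 0 < 1 = l_{v_{i+1}}$ and becomes updatable, while every already-updated bank $v_j$ ($j \le i$) stays consistent at $r_{v_j} = 0 = R(0,1)$ and every not-yet-reached bank is unaffected. Thus at each step exactly one bank is updatable, the ordering $v_1, v_2, \dots, v_k$ is forced, and after the $k$-th update (of $v_k$) no bank is updatable, so the system stabilizes in exactly $k$ steps; the case $k=0$ is handled trivially by an empty system.

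The main obstacle is really only bookkeeping: verifying that updating $v_i$ changes the assets of $v_{i+1}$ alone, so that no spurious bank becomes updatable and the count is exactly $k$ rather than merely $O(k)$. Since the contracts here are plain debts with no CDSs, the liabilities $l_v$ never change during the process and only the single downstream payment is affected at each step, which makes this verification immediate.
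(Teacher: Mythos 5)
Your construction is correct: in the debt chain the payment $p_{v_{i-1},v_i}=r_{v_{i-1}}\cdot l_{v_{i-1},v_i}$ depends only on the announced rate of the predecessor, so each $v_i$ with $i\ge 2$ initially sits at $a_{v_i}=l_{v_i}=1$ and is not updatable, $v_1$ is the unique updatable bank, and each update makes exactly one new bank updatable while leaving all others consistent; the process is forced to take exactly $k$ steps. However, this is a genuinely different route from the paper. The paper's proof uses the constant-size looping gadget of Figure~\ref{fig:infinite} augmented with a stopper node $w$ (Figure~\ref{fig:stopatwill}): the ordering alternates $u$ and $v$ for as long as desired, and once $w$ announces its default both $u$ and $v$ receive enough assets to stabilize within two further updates, after which $O(1)$ independent defaulting nodes adjust the offset to hit $k$ exactly. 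The trade-off is as follows. Your chain needs $\Theta(k)$ banks and admits only one ordering, so it proves the lemma but says nothing about order-dependence; in exchange it uses only simple debts (liabilities never change during the process), which makes the bookkeeping immediate. The paper's gadget has size $O(1)$ independent of $k$ and places the entire control of the stabilization time in the choice of ordering, which is the thematic point of the surrounding section (that stabilization time can heavily depend on the order of updates); the cost is that it relies on CDSs and on the reversibility of defaults. Both establish the statement as written, since the lemma permits the system to depend on $k$.
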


\begin{proof}
Consider the system on Figure \ref{fig:stopatwill}. Similarly to Figure \ref{fig:infinite}, this system allows us to produce an arbitrarily long sequence by switching only $u$ and $v$ repeatedly. However, when $w$ announces a default, then both $u$ and $v$ gain enough assets to fulfill their obligations, so the system stabilizes after at most 2 more updates.

This allows us reach any magnitude of stabilization time, apart from a constant offset. We can then simply add $O(1)$ more independent defaulting nodes to reach the desired value $k$. \qedhere
\end{proof}

This already shows that stabilization time can heavily depend on the order of updates. A more extreme case of this is when the choice of the first update already decides between two very different outcomes for the system.

\begin{lemma} \label{lem:branchtime}
There is a system where depending on the first update, the system either stabilizes in $1$ step, or does not ever stabilize.
\end{lemma}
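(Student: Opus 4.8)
The plan is to superimpose the infinite-loop gadget of Figure \ref{fig:infinite} with an extra ``defuser'' bank, calibrated so that exactly two banks are updatable at time $0$: updating one of them disarms the loop and halts the whole system in a single step, while updating the other lets the loop run forever. The guiding intuition is that the loop of Figure \ref{fig:infinite} stays alive only as long as its entry bank is forced to default; if a cheap one-shot default can inject just enough assets into that entry bank to keep it solvent, the loop never ignites.

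Concretely, I would start from a source $s$ with $e_s=\infty$, a bank $u$ with a simple debt of weight $1$ to $v$, a bank $v$ with a simple debt of weight $1$ back to $s$, and a CDS from $s$ to $u$ in reference to $v$; this is precisely the looping construction of Figure \ref{fig:infinite}. On top of it I add a defuser bank $b$ with a simple debt of weight $1$ to $s$ (so that $b$ has a liability but no assets, hence is updatable at time $0$) together with a second CDS from $s$ to $u$, now in reference to $b$. Then the assets of $u$ equal $(1-r_v)+(1-r_b)$ against a liability of $1$. Since $r_u=r_v=r_b=1$ at time $0$, a direct check of the assets and liabilities shows that $u$ and $b$ are the only updatable banks: $v$ receives a full unit from $u$ and is balanced, and $s$ is solvent by assumption.

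The two first moves are then analysed separately. If $b$ updates first it drops to $r_b=0$, which lifts $u$'s assets to $(1-1)+(1-0)=1$, exactly matching its liability; hence $r_u=1$ stays consistent, $v$ still receives its full unit, and $b$ is consistent at $0$, so no bank is updatable and the process halts in exactly one step. If instead $u$ updates first it drops to $r_u=0$, $v$ loses its incoming payment and must follow, and from here one can replay the exact alternation of Figure \ref{fig:infinite} forever simply by never updating $b$: with $r_b=1$ the reference-$b$ CDS contributes $0$, so the subsystem on $\{s,u,v\}$ is literally the loop gadget and never stabilizes.

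The only real work is the calibration in the first case: one must verify that $b$'s single default makes $u$ \emph{simultaneously} consistent and leaves every other bank consistent, so that halting genuinely happens in one step rather than spawning a further cascade, which is what pins the weight of the reference-$b$ CDS to $u$'s liability. It is also worth checking explicitly that the infinite branch is not accidentally self-defusing: because the non-terminating ordering never updates $b$, the reference-$b$ CDS stays inactive and cannot rescue $u$, so the cycle is genuinely unbroken. Both checks reduce to short evaluations of $a_v(r)$ and $l_v(r)$ at the handful of affected banks.
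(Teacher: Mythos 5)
Your construction gets the ``stabilizes in $1$ step'' branch right, but the non-terminating branch has a genuine gap. After $u$ fires first, your defuser $b$ remains updatable forever (it has liability $1$ and assets $0$ throughout), and the moment $b$ does update to $r_b=0$, the reference-$b$ CDS permanently grants $u$ at least $1$ unit of assets, which forces $r_u=1$, then $r_v=1$, and the whole system stabilizes. So in the bad branch your system admits stabilizing continuations (e.g.\ $u$, then $b$, then $u$ again), and the only non-stabilizing ordering is the one that starves $b$ indefinitely --- which is exactly the kind of ordering the model's ``eventually has to announce'' clause is meant to exclude, and which in any case falls short of the lemma's claim that the system \emph{does not ever} stabilize after the wrong first update. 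Your own closing remark that the infinite branch works ``simply by never updating $b$'' is precisely where the argument breaks: you need the non-stabilization to hold for every continuation, not for one adversarially chosen one.

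The missing idea is that the defuser must be \emph{permanently disarmed} in the bad branch, not merely postponed. The paper achieves this (Figure~\ref{fig:difftime}) by coupling the two candidate first-movers into a branching gadget in the style of Figure~\ref{fig:branch}: $w_1$ and $w_2$ each hold an incoming CDS referencing the other, so whichever one defaults first rescues the other for good, and the rescued bank is never updatable again. Since it is $w_1$'s \emph{solvency} (via a debt $w_1 \to u$) that feeds the loop's entry bank, the default of $w_1$ starves $u$ irrevocably, and from then on exactly one bank is updatable at each step, so every continuation is the infinite alternation of Figure~\ref{fig:infinite}. You could repair your construction the same way: replace the free-standing $b$ by a pair of mutually referencing banks so that if $u$'s side fires first, the would-be rescuer is immediately made solvent and loses the ability to ever inject assets into $u$.
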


\begin{proof}

Figure \ref{fig:difftime} is obtained by combining the base ideas of Figures \ref{fig:infinite} and \ref{fig:branch}. In this network, either bank $w_1$ or $w_2$ must execute the first update.

If $w_2$ is the first to announce $r_{w_2}=0$, then $w_1$ receives a payment of 1, and the system immediately stabilizes; no other bank will make an update.

However, if we update $r_{w_1}=0$ first, then $w_2$ survives, but on the other hand, $u$ receives no assets at all. In this case, nodes $u$ and $v$ are in the same situation as in Figure \ref{fig:infinite}, and thus the upper part of the system will never stabilize. \qedhere
\end{proof}

Finally, infinite loops are not the only examples of long stabilization: it is also possible that the system does stabilize eventually, but for any ordering of updates, this only happens after exponentially many steps.

\begin{theorem} \label{th:binary_counter}
There is a system where for any possible ordering, the system eventually stabilizes, but only after $2^{\Omega(n)}$ steps.
\end{theorem}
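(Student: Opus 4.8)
The plan is to build a network that simulates an $n$-bit binary counter, so that the sequential process is forced to count from $0$ up to $2^n-1$ before it can stabilize. I would use $O(n)$ banks, grouped into $n$ bit-gadgets $G_1,\dots,G_n$, where $G_i$ represents the $i$-th bit and a bit-bank in $G_i$ encodes the bit value through its recovery rate (say $r=1$ for $0$ and $r=0$ for $1$). The gadgets are chained so that a carry can only propagate from $G_i$ to $G_{i+1}$, reusing the two primitives already developed in the excerpt: the branching gadget of Figure \ref{fig:branch}, whose first update among two banks locks in an outcome, serves as the carry/latch between consecutive bits, while a controlled variant of the loop gadget of Figure \ref{fig:infinite} drives the least significant bit to toggle repeatedly. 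The whole point of the construction is to tame the reversibility that causes the infinite loop in Figure \ref{fig:infinite}: the higher bits must act as a brake that halts the oscillation of the lower bits after exactly $2^{\Omega(n)}$ toggles.

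The central object of the proof is a single integer potential $\Phi$ defined on configurations, namely the number of elementary bit-flips that have occurred so far, equivalently the position of the current configuration in the forced counting sequence. I would first establish a correspondence between reachable configurations and the states of an abstract binary counter: the initial all-ones state $r^{(0)}$ corresponds to counter value $0$, and the unique stable configuration corresponds to the maximal value $2^n-1$. The key lemma is then a combined monotonicity-and-confluence statement: from any reachable non-stable configuration, every admissible update strictly increases $\Phi$ by a bounded (in fact $O(1)$) amount and leads to a configuration that again lies on the counting sequence, independently of which updatable bank is chosen; confluence is what makes $\Phi$ well-defined as a function of the configuration despite the adversarial ordering.

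Given this lemma, both halves of the theorem follow at once. Since $\Phi$ is a bounded, strictly increasing integer potential, no ordering can cycle and the process must terminate, which is exactly what rules out the non-termination seen in Figure \ref{fig:infinite}. Conversely, since $\Phi$ grows by only $O(1)$ per step while it must climb from $0$ to $\Phi_{\max}=\Theta(2^n)$ (the total flip-count of an $n$-bit counter is $\sum_{i} 2^{n-i}=\Theta(2^n)$), any ordering needs at least $2^{\Omega(n)}$ steps, as claimed.

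I expect the main obstacle to be the carry logic itself: I have to choose the contract weights and external assets so that, under an adversarial ordering, a partially-propagated carry can never be undone and no spurious update outside the intended counting sequence is ever enabled. In other words, the delicate part is proving confluence and strict monotonicity of $\Phi$ simultaneously, guaranteeing that reversibility is available exactly where the counter needs a bit to toggle back to $0$ during a carry, but never in a way that lets the adversary decrease the counter or stall it. A careful case analysis over the local states of each bit-gadget and its neighbors, checking the value of $R(a_v,l_v)$ for every potentially updatable bank in every combination of neighboring bit values, is what makes this work; verifying that recovery rates stay within the intended discrete set throughout is the bulk of the routine calculation.
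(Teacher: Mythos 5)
Your high-level plan coincides with the paper's: build an $n$-bit binary counter whose forced increment sequence costs $\sum_i 2^{n-i}=\Theta(2^n)$ updates. However, the proposal has a genuine gap precisely where the paper spends essentially all of its effort, and the two primitives you propose to reuse do not suffice for the job. First, the branching gadget of Figure~\ref{fig:branch} is a \emph{one-shot} latch: once one of its two banks defaults first, the outcome is decided and the gadget cannot be cleanly re-armed, whereas bit $i$ of a counter must be set and reset $2^{n-i}$ times. The paper has to introduce a different primitive for this --- a stable bit gadget built from two banks with mutual CDSs, which holds its value until externally overwritten --- together with a nontrivial resetting protocol (each logical state is split into entry/reset/exit phases that re-arm one another in round-robin) exactly because making a re-usable, externally writable memory cell is not something the branching gadget gives you. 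Second, and more importantly, your ``key lemma'' (every admissible update strictly increases $\Phi$ and stays on the counting sequence, for \emph{every} choice of updatable bank) is asserted but no mechanism is offered that could make it true. In a local ripple-carry design, while a carry propagates through $G_i,\dots,G_j$ the least significant gadget is typically already updatable again, and the oscillating gadget of Figure~\ref{fig:infinite} returns to its \emph{exact} prior state after two updates, so nothing in that component records how many oscillations have occurred; an adversarial scheduler can interleave or stall these updates unless the construction actively forbids it. The paper's solution is not to prove confluence over many orderings but to eliminate them: a controller built from condition/state gadgets makes the entry of each logical state contingent on specific recovery rates, so that at (essentially) every step only one bank is updatable, namely the one performing the next step of the increment.

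In short, the statement of your potential-function lemma is a reasonable way to package the conclusion, but the actual content of the theorem is the gadget engineering that makes the lemma true, and that is missing: you need (i) a resettable bit memory rather than the one-shot branching gadget, and (ii) a mechanism that either serializes the updates (the paper's state machine) or provably handles all interleavings of concurrent updatable banks. Without these, the claim that ``no spurious update outside the intended counting sequence is ever enabled'' is exactly the open part of the proof, not a routine case analysis.
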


\renewcommand{\proofname}{Proof sketch.}

\begin{proof}
This proof requires a significantly more complex construction than our previous statements. We only outline the main idea of the construction here, and we discuss the details in Appendix \ref{App:A}.

The first step of the proof is to build a \textit{stable bit gadget}, which represents a mutable binary variable. The gadget offers a simple interface to set the bit to $0$ or $1$ through external conditions, and otherwise maintains its current value until the next such operation is executed.

Besides this, we create gadgets that describe logical states of an abstract process, similarly to a finite automaton. We also encode conditional transitions between these state gadgets, i.e. ensure that the system can only enter a given logical state if some banks currently have a specific recovery rate. This allows us to describe a logical process where the next state of the system is always determined by the current state and the current value of some stable bit gadgets.

Using these tools, we can essentially design a binary counter on $k=\Omega(n)$ bits, with $k$ stable bits representing the bits of the counter. This counter will proceed to count from $0$ to $2^k-1$, and only stabilize after the counting has finished, resulting in a sequence of at least $2^k$ steps.

The most challenging task is to ensure that in every step of the process, there is only one possible update we can execute next: the appropriate next step of the counting procedure. To achieve this, we not only need to ensure that some banks become updatable at specific times, but we also have to force the banks to indeed execute these updates, by encoding them as requirements in the transition conditions of our logical states. This results in a heavily restricted construction where there is essentially only one valid ordering of updates: the one that corresponds to the step-by-step incrementation of the binary counter.
\end{proof}

\renewcommand{\proofname}{Proof.}

Note that another possible approach for measuring the stabilization time of our systems is to consider the number of \textit{defaulting steps}, i.e. to only count the steps when a bank $v$ updates from $r_v=1$ to $r_v<1$. One can check that our results on stabilization time also hold for this alternative metric.

Finally, as a theoretical curiosity, we point out that our binary variable and state machine gadgets in the proof of Theorem \ref{th:binary_counter} demonstrate that we can essentially use financial networks as a model of computation. We discuss the expressive power of this model in Appendix \ref{App:TM}.

\begin{theorem} \label{th:TM}
We can use financial networks to simulate any Turing-machine with a finite tape.
\end{theorem}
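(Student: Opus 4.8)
The plan is to reuse the three building blocks developed for Theorem~\ref{th:binary_counter} --- the stable bit gadget, the logical state gadgets, and the conditional transitions between them --- and to assemble them into a faithful simulation of a fixed finite-tape Turing machine $M$. The crucial observation is that a Turing machine whose tape has a bounded length $m$ has only finitely many configurations, so it is in fact a finite-state transition system; hence it suffices to show that our gadgets are expressive enough to realize an \emph{arbitrary} finite control that reads from and writes to a finite collection of binary cells. The counter of Theorem~\ref{th:binary_counter} is already one such machine, so the generalization to an arbitrary one is mostly a matter of encoding.

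Concretely, I would encode the tape of $m$ cells over alphabet $\Gamma$ using $m \cdot \lceil \log_2 |\Gamma| \rceil$ stable bit gadgets, with one block of $\lceil \log_2 |\Gamma| \rceil$ bits storing the symbol currently written in each cell. The finite control is captured by logical state gadgets, but instead of using only the internal states $Q$ of $M$, I would use the \emph{combined} control state $(q, i)$ consisting of the internal state $q \in Q$ together with the head position $i \in \{1, \dots, m\}$. There are only $|Q| \cdot m$ such combined states, so this remains a finite automaton, and fixing the head position inside the state is what lets us read and write ``the cell under the head'' without any indirection.

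Each step of $M$ then becomes a conditional transition of the gadget system. From the combined state $(q,i)$, I would branch on the bits of cell $i$: for every symbol $\sigma \in \Gamma$, the condition ``cell $i$ holds $\sigma$'' is simply a condition on the recovery rates of that cell's $\lceil \log_2 |\Gamma| \rceil$ bits, exactly the kind of predicate the transition mechanism of Theorem~\ref{th:binary_counter} already supports. Having identified $\sigma$, the transition function $\delta(q, \sigma) = (q', \sigma', D)$ with $D \in \{-1, +1\}$ tells us the new symbol, internal state, and head direction; I would realize this by using the set-to-$0$ and set-to-$1$ operations of the stable bit gadget to overwrite cell $i$ with $\sigma'$ and then moving to the combined state $(q', i+D)$. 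A halting state of $M$ corresponds to a combined state with no outgoing transition, so the gadget system becomes free of updatable banks precisely when $M$ halts; thus the network stabilizes if and only if $M$ halts, and its run mirrors the computation of $M$ step by step.

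The main obstacle, as in the counter construction, is not the encoding itself but the \emph{sequencing}: I must guarantee that within each simulated step the reads of cell $i$ and the current state are committed before the writes to cell $i$ and the state update occur, and that at every intermediate moment there is essentially only one updatable bank, so that no spurious ordering can derail the simulation. This is exactly the delicate part already handled for Theorem~\ref{th:binary_counter}, where transitions were forced by encoding them as requirements in the transition conditions; I would reuse that same discipline here, so that the new work is confined to the routine (if tedious) bookkeeping of wiring the $|Q| \cdot m \cdot |\Gamma|$ transitions together with the cell-rewrite sub-procedures. The full details are deferred to Appendix~\ref{App:TM}.
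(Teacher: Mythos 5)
Your proposal is correct: the theorem only asks for \emph{some} simulation, and your construction --- tape cells as blocks of stable bit gadgets, finite control realized by state gadgets over the \emph{combined} states $(q,i)$, transitions branching on the bits of cell $i$, and the entry/reset/exit discipline of Theorem~\ref{th:binary_counter} to force a unique ordering --- goes through with the machinery already built for the counter. The difference from the paper is in how the head is handled. The paper briefly mentions a configuration-as-automaton encoding as a crude baseline, but its preferred construction keeps the control and the tape strictly separate: the head position is stored in an auxiliary binary counter of $O(\log m)$ banks that is incremented or decremented on a move, and each cell $c$ gets dedicated \textsc{read}$_{c,b}$ and \textsc{write}$_{c,b}$ states that fire only when the counter addresses $c$, copying the cell's content to (or from) a fixed ``register'' stable bit. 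This yields $O(s)$ banks for the state machine plus $O(m)$ for the tape, whereas your product construction multiplies them into $O(s\cdot m)$ control states (times $|\Gamma|$ transitions each). For the statement as written both are fine; the paper's version buys a smaller and conceptually cleaner network in which the banks modeling the program and the banks modeling the tape are disjoint, while yours avoids the addressing indirection entirely and arguably has less sequencing to verify, since ``the cell under the head'' is hard-wired into each control state. One small point to keep in mind if you flesh out the details: after the successor state's entry condition has tested ``cell $i$ holds $\sigma$'' and the state then overwrites cell $i$ with $\sigma'\neq\sigma$, the entry condition becomes false and the gadget's bank $w$ reverts to $r_w=1$; this is harmless only because the state gadget latches its activity in a stable bit, and you need the subsequent \textsc{done}-style state to verify the writes --- exactly the subtlety the paper's three-phase resetting scheme is designed to absorb, so your plan to reuse it is the right call.
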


\begin{figure}
\centering
\minipage{0.27\textwidth}
\centering
	\vspace{2pt}
	\resizebox{0.96\textwidth}{!}{

\begin{tikzpicture}

	\draw[very thick, brown, arrows=-latex] (0pt,0pt) -- (15pt,6pt) -- (60pt,6pt) -- (72pt,1pt);
	\draw[very thick, brown, arrows=-latex] (0pt,0pt) -- (15pt,-6pt) -- (60pt,-6pt) -- (72pt,-1pt);
	\draw[very thick, blue, arrows=-latex] (80pt,0pt) -- (46pt,44pt);
	\draw[very thick, blue, arrows=-latex] (40pt,50pt) -- (6pt,6pt);
	\draw[very thick, blue, arrows=-latex] (10pt,-35pt) -- (62pt,-35pt);
	
	\node[anchor=south] at (40pt,6pt) {\small $v$};
	\node[anchor=north] at (40pt,-6pt) {\small $w$};
	
	\draw[black, fill=white] (0pt,0pt) circle (8pt);
	\draw[black, fill=white] (80pt,0pt) circle (8pt);
	\draw[black, fill=white] (40pt,50pt) circle (8pt);
	\draw[black, fill=white] (10pt,-35pt) circle (8pt);
	\draw[black, fill=white] (70pt,-35pt) circle (8pt);
	
	\node[anchor=center] at (80pt,0pt) {\large $u$};
	\node[anchor=center] at (40pt,50pt) {\large $v$};
	\node[anchor=center] at (10pt,-35pt) {\large $w$};
	
	\draw [fill=white] (2pt,-4pt) rectangle (11pt,-11pt);
	\node[anchor=center] at (6.5pt,-7.5pt) {\scriptsize $\infty$};
	
\end{tikzpicture}}
	\vspace{0pt}
	\caption{Example system for Lemma \ref{lem:stop}. Recall that a CDS labeled with $v$ still describes a payment obligation of $1-r_v$.}
	\label{fig:stopatwill}
\endminipage\hfill
\hspace{0.03\textwidth}
\minipage{0.29\textwidth}
\centering
	\vspace{-3pt}
	\resizebox{1\textwidth}{!}{

\begin{tikzpicture}
	
	\draw[very thick, blue, arrows=-latex] (50pt,-20pt) -- (92pt,-3pt);
	\draw[very thick, blue, arrows=-latex] (50pt,20pt) -- (50pt,52pt);
	\draw[very thick, blue, arrows=-latex] (50pt,60pt) -- (82pt,60pt);
	\draw[very thick, brown, arrows=-latex] (0pt,0pt) -- (0pt,60pt) -- (42pt,60pt);
	\draw[very thick, blue, arrows=-latex] (90pt,60pt) -- (99pt,8pt);	
	\draw[very thick, brown, arrows=-latex] (0pt,0pt) -- (42pt,19pt);
	\draw[very thick, brown, arrows=-latex] (0pt,0pt) -- (42pt,-19pt);
	
	\node[anchor=center] at (21pt,16pt) {\small $w_2$};
	\node[anchor=center] at (21pt,-16pt) {\small $w_1$};
	\node[anchor=center] at (22pt,66pt) {\small $v$};
	
	\draw[black, fill=white] (0pt,0pt) circle (8pt);
	\draw[black, fill=white] (50pt,20pt) circle (8pt);
	\draw[black, fill=white] (50pt,-20pt) circle (8pt);
	\draw[black, fill=white] (50pt,60pt) circle (8pt);
	\draw[black, fill=white] (90pt,60pt) circle (8pt);
	\draw[black, fill=white] (100pt,0pt) circle (8pt);
	
	\node[anchor=center] at (50.5pt,19.5pt) {\large $w_1$};
	\node[anchor=center] at (50.5pt,-20.5pt) {\large $w_2$};
	\node[anchor=center] at (50pt,60pt) {\large $u$};
	\node[anchor=center] at (90pt,60pt) {\large $v$};
	
	\draw [fill=white] (2pt,-4pt) rectangle (11pt,-11pt);
	\node[anchor=center] at (6.5pt,-7.5pt) {\scriptsize $\infty$};
	
\end{tikzpicture}}
	\vspace{-12pt}
	\caption{Example system for Lemma \ref{lem:branchtime}, where stabilization time depends on the choice of the first update.}
	\label{fig:difftime}
\endminipage\hfill
\hspace{0.03\textwidth}
\minipage{0.36\textwidth}
\centering
	\vspace{13pt}
	\resizebox{1\textwidth}{!}{

\begin{tikzpicture}
	
	\draw[very thick, blue, arrows=-latex] (50pt,25pt) -- (72pt,25pt);
	\draw[very thick, blue, arrows=-latex] (50pt,-25pt) -- (72pt,-25pt);
	\draw[very thick, blue, arrows=-latex] (80pt,-25pt) -- (102pt,-25pt);
	\draw[very thick, blue, arrows=-latex] (118pt,-25pt) -- (132pt,-25pt);
	\draw[very thick, brown, arrows=-latex] (0pt,0pt) -- (42pt,23pt);
	\draw[very thick, brown, arrows=-latex] (0pt,0pt) -- (42pt,-23pt);
	
	\node[anchor=center] at (21pt,18pt) {\small $v$};
	\node[anchor=center] at (21pt,-18pt) {\small $u$};
	
	\draw[black, fill=white] (0pt,0pt) circle (8pt);
	\draw[black, fill=white] (50pt,25pt) circle (8pt);
	\draw[black, fill=white] (50pt,-25pt) circle (8pt);
	\draw[black, fill=white] (80pt,25pt) circle (8pt);
	\draw[black, fill=white] (80pt,-25pt) circle (8pt);
	\node[anchor=center] at (110pt,-29pt) {\small $...$};
	\draw[black, fill=white] (140pt,-25pt) circle (8pt);
	
	\node[anchor=center] at (50pt,25pt) {\large $u$};
	\node[anchor=center] at (50pt,-25pt) {\large $v$};
	
	\draw [fill=white] (2pt,-4pt) rectangle (11pt,-11pt);
	\node[anchor=center] at (6.5pt,-7.5pt) {\scriptsize $\infty$};
	
	\draw[gray, thick] (73pt,-37pt) -- (73pt,-40pt) -- (147pt,-40pt) -- (147pt,-37pt);
	\draw[gray, thick] (110pt,-43pt) -- (110pt,-40pt);
	\node[anchor=center] at (110pt,-49pt) {\footnotesize $n-4$ banks};
	
\end{tikzpicture}}
	\vspace{-13pt}
	\caption{Example system for Lemma \ref{lem:branching_best}, i.e. where the number of defaults depends on the choice of the first update.}
	\label{fig:diffoutcome}
\endminipage\hfill
\end{figure}

\subsection{Outcome with the fewest defaults} \label{sec:reversible_best}

In case of a larger shock, a financial authority could also be interested in the final state of the system, and in particular, the number of banks that end up in default. This can again heavily depend on the order of updates; in fact, even a single decision in the ordering can be critical from this perspective.

\begin{lemma} \label{lem:branching_best}
Depending on the first update, the number of defaults can be either $O(1)$ or $n-O(1)$.
\end{lemma}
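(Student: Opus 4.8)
The plan is to combine the branching gadget of Figure~\ref{fig:branch} with a chain of simple debts, so that a single branching decision at the very first update is amplified into a cascade along the chain. I would work directly with the system of Figure~\ref{fig:diffoutcome}: the source bank (with $e=\infty$) holds a CDS to $u$ in reference to $v$ and a CDS to $v$ in reference to $u$; bank $u$ owes a unit debt to an isolated sink, while bank $v$ owes a unit debt starting a chain of $n-4$ banks, each forwarding a unit debt to the next.

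First I would pin down the initial configuration. Since every bank starts at recovery rate $1$, both CDSs induce payment $1-1=0$, so $u$ and $v$ each have assets $0$ against a liability of $1$ and are the only updatable banks; meanwhile each chain bank is fed a full unit by its predecessor (whose rate is still $1$) and is not updatable. Hence the first update must be made by $u$ or $v$, which is exactly the branching situation of Figure~\ref{fig:branch}: whichever of the two defaults first triggers the CDS that pays one unit to the other, letting that other bank survive.

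Next I would trace the two resulting cascades. If $u$ updates first, then $r_u=0$ makes the source pay one unit to $v$; now $R(1,1)=1$, so $v$ is no longer updatable, keeps paying its full debt, and the chain remains fed, so only $u$ defaults and the number of defaults is $O(1)$. If instead $v$ updates first, then $u$ survives but $v$ defaults to $r_v=0$ and pays $0$ into the chain; the first chain bank is starved, updates to $0$, and the default propagates through all $n-4$ chain banks, giving $n-3 = n-O(1)$ defaults. In each case I would confirm that no further bank becomes updatable, so the described state is a genuine stable endstate.

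The main obstacle, although a routine one, is verifying that both cascades propagate cleanly: that a bank receiving its full incoming unit can again pay its full outgoing debt (so the survive-cascade never stalls), and that a bank receiving nothing defaults to exactly $0$ and forwards $0$ (so the default-cascade reaches the end of the chain). Both follow because each chain bank has incoming debt, liability equal to $1$, and no external assets, so its recovery rate equals its normalized incoming payment, i.e.\ $R(0,1)=0$ and $R(1,1)=1$. A final bookkeeping step counts the four non-chain banks (source, $u$, $v$, sink) against the $n-4$ chain banks to confirm the total is $n$ and the default count in the second case is indeed $n-O(1)$.
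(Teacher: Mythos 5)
Your proposal is correct and follows essentially the same argument as the paper's proof: it uses the very same system (Figure \ref{fig:diffoutcome}), identifies the branching decision between $u$ and $v$ as the first update, and traces the two cascades to obtain $1$ versus $n-3$ defaults. Your write-up merely adds more explicit verification of the updatability conditions along the chain, which the paper leaves implicit.
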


\begin{proof}
Consider the system on Figure \ref{fig:diffoutcome}. If $u$ is the first to report a default with $r_u=0$, then $v$ receives 1 unit of payment, and thus no other node defaults. On the other hand, if $v$ reports a default first, then $u$ survives, but all the nodes in the lower chain have no incoming assets, and thus they all have to report a default eventually. So based on the first update, the number of defaults is either $1$ or $n-3$. \qedhere
\end{proof}

Hence if the authority has some influence over the ordering of updates, e.g. by allowing more flexibility to some banks than to others, then it could dramatically reduce the number of banks that end up in default. Unfortunately, even if we have complete control over the ordering, it is still hard to find the best possible ordering (in terms of the number of defaults in the final outcome).

\begin{theorem} \label{th:best_ordering_hard}
It is NP-hard to find the number of defaulting nodes in the best possible ordering.
\end{theorem}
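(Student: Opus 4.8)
The plan is to reduce from \textsc{3-Sat}, using the branching gadget of Figure~\ref{fig:branch} as a variable-selection device. Given a formula $\varphi$ with variables $x_1,\dots,x_m$ and clauses $C_1,\dots,C_k$, I would build a network as follows. For each variable $x_i$ I install one copy of the branching gadget whose two banks I call $x_i$ and $\bar{x}_i$; by the property established in Section~\ref{sec:basics}, in any ordering exactly one of these two banks ends up in default, and which one defaults is decided precisely by which of them is updated first. I adopt the convention that a literal is \emph{true} exactly when its bank defaults, so that the two banks of a gadget always encode a consistent Boolean assignment. Since the $m$ gadgets use disjoint banks, sources and sinks, they are independent, and every one of the $2^m$ assignments is realizable by a suitable choice of first updates.

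For each clause $C_j=(\ell_{j,1}\vee\ell_{j,2}\vee\ell_{j,3})$ I add a \emph{clause node} $c_j$ with a single debt of weight $1$ to a fresh sink, together with a source bank of assets $\infty$ holding three CDSs pointing to $c_j$, one in reference to each literal bank $\ell_{j,1},\ell_{j,2},\ell_{j,3}$. Because a CDS in reference to a defaulting literal pays out $1$, the node $c_j$ receives at least one unit of assets, and hence covers its debt, exactly when at least one of its literals is true; if no literal of $C_j$ is true it receives nothing and is forced into default. Crucially, the literal banks act here only as reference entities, so these CDSs feed assets into the clause layer without ever feeding back into the variable gadgets; the source banks never default, and the sinks have no liabilities (so $l=0$ and $R(a,0)=1$) and thus never default either.

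The correctness claim is a per-ordering invariant. The construction has no cycle of positive feedback: each variable gadget settles permanently after its first update (once $\ell$ defaults, its partner is saved and stays at recovery rate $1$, so $\ell$ never returns), and clause nodes then react monotonically as literals default, updating at most twice. Hence every ordering stabilizes. At stabilization the literal banks contribute exactly $m$ defaults, one per gadget, while each clause node $c_j$ is in default if and only if the encoded assignment leaves $C_j$ unsatisfied: an unsatisfied $c_j$ has $a_{c_j}=0<l_{c_j}=1$ in the final state and is forced to $r_{c_j}=0$, whereas a satisfied $c_j$ has $a_{c_j}\ge 1=l_{c_j}$ and is stable at $r_{c_j}=1$, regardless of any temporary default-and-return under reversibility. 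Therefore the number of defaults in the final state equals $m$ plus the number of clauses unsatisfied by the chosen assignment, so the minimum over all orderings is exactly $m$ if $\varphi$ is satisfiable and at least $m+1$ otherwise. Computing the number of defaulting nodes under the best ordering thus decides satisfiability, proving NP-hardness.

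I expect the main obstacle to be the verification of this per-ordering invariant rather than the gadget design itself. The delicate point is to rule out that reversibility, or an adversarial interleaving of clause-node and variable-gadget updates, either creates an infinite loop or shifts the final default count away from $m+\#\{\text{unsatisfied clauses}\}$. Establishing that the literal banks are monotone (each moving from $1$ to $0$ at most once), and that the clause layer is a feedback-free sink, so that the final state depends only on the encoded assignment and not on the order in which the clause nodes are processed, is the crux of the argument.
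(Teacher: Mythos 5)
Your proposal is correct and follows essentially the same route as the paper: branching gadgets encode variables (exactly one of the two literal banks defaults, chosen by which updates first), and clause nodes with incoming CDSs referencing the literal banks default precisely when no literal of the clause is true, so the minimum default count is (number of variables) plus (number of unsatisfied clauses). The only difference is that you reduce from \textsc{3-Sat} while the paper reduces from MAXSAT, which is immaterial for NP-hardness; your extra care in verifying the per-ordering invariant (gadgets settle permanently, the clause layer is a feedback-free sink) is a point the paper glosses over but matches its intent.
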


\begin{proof}
We reduce the question to the MAXSAT problem: given a boolean formula in conjunctive normal form, the goal of MAXSAT is to find the assignment of variables that satisfies the highest possible number of clauses \cite{maxsat}.

Assume we have a MAXSAT problem on $k$ variables $x_1, ..., x_k$, and $m$ clauses. Note that in our financial systems, the branching gadget of Figure \ref{fig:branch} is a natural candidate for representing a boolean variable, since in any sequence, exactly one of $u$ and $v$ will eventually default. We point out that this gadget has already been used for similar purposes before in \cite{base2} and \cite{ec}. 

Hence for each variable $x_i$, we create a separate branching gadget in our system, and consider node $u$ to represent the literal $x_i$, and node $v$ to represent the literal $\neg x_i$. That is, we will consider $x_i=\textsc{true}$ if $u$ defaults, while we consider $x_i=\textsc{false}$ if $v$ defaults.

Furthermore, for each clause of the input formula, we create the clause gadget shown in Figure \ref{fig:maxsat_best}, with the CDSs labeled by the banks representing the literals in the clause. For example, the gadget in the figure is obtained for the clause $(x_1 \vee x_3 \vee \neg x_4)$. If any of the banks $x_1$, $x_3$ or $\neg x_4$ default, then $v$ receives enough assets to pay its debt, whereas otherwise, $v$ must eventually default.

If we aim to avoid as many defaults as possible, then the reasonable ordering strategy is to first evaluate all the variable gadgets, and the clause gadgets only afterwards. In this case, each bank $v$ of a clause gadget survives if and only if there is a true literal in the corresponding clause. This way the number of defaulting nodes in the final state is always exactly $k$ in the variable gadgets, and at most $m-\textsc{opt}$ in the clause gadgets, where $\textsc{opt}$ denotes the maximal number of satisfiable clauses in our MAXSAT problem. Thus the minimal number of defaulting nodes in the system is altogether $k+m-\textsc{opt}$. Finding this value also allows us to determine $\textsc{opt}$, which completes our reduction.
\end{proof}

To analyze the effects of a shock, one might also be interested in the worst possible ordering; a similar reduction shows that this is also hard to find.

\begin{theorem} \label{th:worst_ordering_hard}
It is NP-hard to find the number of defaulting nodes in the worst possible ordering.
\end{theorem}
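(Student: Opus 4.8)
The plan is to adapt the MAXSAT reduction behind Theorem~\ref{th:best_ordering_hard}, but to modify the clause gadget so that \emph{maximizing} (rather than minimizing) the number of defaults corresponds to satisfying as many clauses as possible. I keep the variable gadgets exactly as before: for each variable $x_i$ I use a branching gadget (Figure~\ref{fig:branch}) whose node $u$ represents $x_i$ and whose node $v$ represents $\neg x_i$, so that in any stabilizing ordering exactly one of the two defaults, and the adversary choosing the ordering can realize an arbitrary truth assignment. As in the best-ordering proof, each variable gadget contributes exactly one default regardless of the ordering, for a fixed total of $k$.

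The key change is the clause gadget. Instead of letting the clause node $v$ receive assets (and thus survive) when a literal is true, I make $v$ a \emph{debtor} on a CDS in reference to each literal bank of its clause, with all these CDSs routed to a fixed sink node that has no outgoing obligations. I give $v$ external assets $e_v=1$ and a single simple debt of weight $1$, so that $a_v=l_v$ when none of the referenced literal banks is in default. Then, if every literal of the clause is false (every referenced bank survives with recovery rate $1$), all CDS liabilities vanish, $a_v=l_v=1$, and $v$ stabilizes with $r_v=1$; but as soon as one literal becomes true, the corresponding reference bank defaults with recovery rate $0$, the CDS adds a liability of $1$, we get $l_v\geq 2 > 1 = a_v$, and $v$ must default. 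Hence $v$ defaults if and only if its clause is satisfied.

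With this gadget the total number of defaults in any stabilized state equals $k$ (from the variable gadgets) plus the number of satisfied clauses under the induced assignment. Taking the worst (default-maximizing) ordering therefore realizes the assignment that maximizes the number of satisfied clauses, so the worst-case default count is exactly $k+\textsc{opt}$, where $\textsc{opt}$ again denotes the MAXSAT optimum. Recovering this count thus determines $\textsc{opt}$, which yields the NP-hardness.

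The main obstacle, and the step that needs the most care, is to argue that this count is robust against the adversary's freedom to interleave updates, i.e.\ that no clever ordering can create defaults beyond those just described. Here I would use that the outcome of every gadget is determined by the final settled assignment: the variable gadgets feed into the clause gadgets but never receive anything back, since the reference entity of a CDS is never paid and each clause debtor only pays the inert sink, so the variable gadgets settle independently and each to exactly one default. For a clause node $v$, reversibility guarantees that at stabilization $v$ has $r_v=1$ precisely when its current liabilities do not exceed its assets; since at stabilization the referenced literal banks already hold their final recovery rates, $v$ ends in default exactly when its clause is satisfied, irrespective of when the intermediate updates of $v$ were performed. I would finally note that every ordering does stabilize, as there are no payment cycles through the clause gadgets, so the worst-case default count is well defined.
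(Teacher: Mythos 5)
Your proposal is correct and takes essentially the same route as the paper: the paper's worst-case clause gadget likewise makes the clause node a CDS \emph{debtor} referencing the literal banks (paying into an inert sink), so that it defaults exactly when the clause is satisfied, giving a worst-case count of $k+\textsc{opt}$. Your addition of a unit external asset balanced by a unit simple debt is an immaterial variation of the paper's gadget, and your stabilization/robustness remarks only make explicit what the paper leaves implicit.
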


\begin{proof}
We can apply the same reduction from MAXSAT as before; we only need to slightly change the clause gadgets. Consider the clause gadget of Figure \ref{fig:maxsat_worst} for the example clause $(x_1 \vee x_3 \vee \neg x_4)$. To maximize the number of defaulting banks in this system, we can first evaluate the variables gadgets, which then allows us to produce an extra default for each clause that has a true literal. Thus the maximum number of defaulting nodes is $k+\textsc{opt}$, which completes our reduction.
\end{proof}

\begin{figure}
\hspace{0.01\textwidth}
\centering
\minipage{0.42\textwidth}
\centering

\begin{tikzpicture}

	\draw[very thick, brown, arrows=-latex] (0pt,0pt) -- (15pt,13pt) -- (60pt,13pt) -- (74pt,4pt);
	\draw[very thick, brown, arrows=-latex] (0pt,0pt) -- (15pt,-13pt) -- (60pt,-13pt) -- (74pt,-4pt);
	\draw[very thick, brown, arrows=-latex] (0pt,0pt) -- (72.5pt,0pt);
	\draw[very thick, blue, arrows=-latex] (80pt,0pt) -- (122.5pt,0pt);
	
	\node[anchor=south] at (40pt,10.5pt) {\footnotesize $x_1$};
	\node[anchor=south] at (40pt,-2.5pt) {\footnotesize $x_3$};
	\node[anchor=south] at (39pt,-15.5pt) {\footnotesize $\neg x_4$};
	
	\draw[black, fill=white] (0pt,0pt) circle (8pt);
	\draw[black, fill=white] (80pt,0pt) circle (8pt);
	\draw[black, fill=white] (130pt,0pt) circle (8pt);
	
	\node[anchor=center] at (80pt,0pt) {\large $v$};
	
	\draw [fill=white] (2pt,-4pt) rectangle (11pt,-11pt);
	\node[anchor=center] at (6.5pt,-7.5pt) {\scriptsize $\infty$};
	
\end{tikzpicture}
	\vspace{4pt}
	\caption{Clause gadget for the MAXSAT reduction in Theorem \ref{th:best_ordering_hard}.}
	\label{fig:maxsat_best}
\endminipage\hfill
\hspace{0.1\textwidth}
\minipage{0.42\textwidth}
\centering

\begin{tikzpicture}

	\draw[very thick, brown, arrows=-latex] (0pt,0pt) -- (15pt,13pt) -- (60pt,13pt) -- (74pt,4pt);
	\draw[very thick, brown, arrows=-latex] (0pt,0pt) -- (15pt,-13pt) -- (60pt,-13pt) -- (74pt,-4pt);
	\draw[very thick, brown, arrows=-latex] (0pt,0pt) -- (72.5pt,0pt);
	
	\node[anchor=south] at (40pt,10.5pt) {\footnotesize $x_1$};
	\node[anchor=south] at (40pt,-2.5pt) {\footnotesize $x_3$};
	\node[anchor=south] at (39pt,-15.5pt) {\footnotesize $\neg x_4$};
	
	\draw[black, fill=white] (0pt,0pt) circle (8pt);
	\draw[black, fill=white] (80pt,0pt) circle (8pt);
	
	\node[anchor=center] at (0pt,0pt) {\large $v$};
	
\end{tikzpicture}
	\vspace{4pt}
	\caption{Clause gadget for the MAXSAT reduction in Theorem \ref{th:worst_ordering_hard}.}
	\label{fig:maxsat_worst}
\endminipage\hfill
\hspace{0.01\textwidth}
\end{figure}

\subsection{Individual defaulting strategies} \label{sec:strat}

It is also natural to consider the effect of the ordering from the perspective of a single bank $v$. More specifically, is $v$ motivated to immediately report its own default? Can it achieve a better outcome for itself by carefully timing its updates?

Intuitively, one would expect that banks are motivated to report their default as late as possible, in hope of obtaining further assets in the meantime. This is indeed true in many cases. For example, in the branching gadget of Figure \ref{fig:branch}, $u$ and $v$ clearly have a short position in each other, and if either of them can wait long enough such that the other bank reports a default first, then it obtains new assets from the incoming CDS and thus manages to avoid a default entirely.

However, due to the complex interconnections in a network, it is in fact also possible that $v$ achieves a better outcome if it reports a default earlier; it is even possible that this is the only strategy which allows $v$ to avoid a default in the endstate of the system. We consider this one of our most surprising results.

\begin{theorem} \label{th:earlydef_rev}
There exists a system where a bank $v_1$ can only avoid a default in the final state of the system if $v_1$ is the first bank to report a default.
\end{theorem}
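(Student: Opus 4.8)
The plan is to exhibit a system in which exactly two banks, the distinguished bank $v_1$ and a competitor $c$, are updatable at time $0$, so that the first update is a genuine choice between them. Around $v_1$ I would attach a feedback gadget, in the spirit of Figure \ref{fig:infinite}, whose purpose is to turn $v_1$'s \emph{own} default into a source of assets for $v_1$: a CDS in reference to $v_1$ raises the liability of an auxiliary bank $a$, and a CDS in reference to $a$ in turn pays $v_1$. Thus, if $v_1$ defaults, it pushes $a$ into default, and the resulting payment on the reference-$a$ CDS flows back to $v_1$. The key requirement is that this must be a one-way effect: once $a$ has collapsed it should stay collapsed even after $v_1$ has returned to $r_{v_1}=1$, so that $v_1$'s rescue becomes permanent rather than oscillating as in Figure \ref{fig:infinite}. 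I would realise this by making $a$ part of a small \emph{bistable latch}, i.e. a subnetwork possessing two stable equilibria — one with $a$ solvent and one with $a$ in default — analogous to the two endstates of the branching gadget of Figure \ref{fig:branch}.

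With this in place, I would wire $v_1$ (a unit debt to a sink, no initial assets, hence updatable at time $0$) so that its default drives the latch from its solvent equilibrium into its defaulted one; there the reference-$a$ CDS pays $v_1$ at least its liability, and the latch keeps $a$ in default, so $v_1$ settles at $r_{v_1}=1$. The competitor $c$ (likewise a unit debt and no assets) is wired in the opposite direction: a CDS in reference to $c$ injects assets into $a$, chosen so that if $c$ defaults while the latch is still solvent, it props the latch up and prevents its collapse even after $v_1$ later defaults. The two directions to verify are then: (i) in an ordering where $v_1$ moves first, $v_1$ ends solvent; and (ii) in every ordering where some bank other than $v_1$ — necessarily $c$ — moves first, the latch stays solvent, $v_1$ never receives its rescue payment, and $v_1$ ends in default. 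Direction (ii) is the routine one, reducing to a branching-style case analysis like that of Figures \ref{fig:branch} and \ref{fig:infinite}.

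The main obstacle is the \emph{robustness} of direction (i): $v_1$ must survive for \emph{every} continuation of an ordering that begins with its update, and in particular the adversary may fire $c$ immediately after $v_1$, before the latch has finished collapsing. A naive monotone latch cannot survive this, since the same injection from $c$ that must be strong enough to keep $a$ solvent when $c$ acts first (as needed for direction (ii)) would also be strong enough to lift $a$ back out of default when $c$ acts just after $v_1$, undoing the rescue. Escaping this contradiction is exactly why the latch must be \emph{bistable}, so that $c$'s injection exhibits hysteresis: it suffices to block the transition out of the solvent equilibrium, but is too weak to force the reverse transition once $a$ has already settled into the defaulted equilibrium. Pinning down contract weights that place both thresholds on the correct sides, and checking that the resulting configuration is a genuine equilibrium for $v_1$, $a$, $c$ and the latch's internal banks under \emph{all} admissible orderings — including confirming that $c$ is rendered permanently non-updatable once $a$ has collapsed, so that $c$ can never re-prop the latch — is the delicate core of the argument.
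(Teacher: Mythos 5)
Your plan matches the paper's construction (Figure \ref{fig:earlydef_rev}) in all essentials: $v_1$'s default triggers a two-node bistable latch of mutually-referencing CDSs (the paper's $u_2,w_2$ pair, i.e.\ its stable bit gadget) whose collapse permanently pays $v_1$ back to solvency, while a symmetric competitor $v_2$ moving first causes an asset injection into that latch (via the collapse of its own latch node $u_1$) that blocks the rescue. The delicate point you flag --- the competitor undoing the rescue by defaulting just after $v_1$ --- is resolved in the paper not by hysteresis but by pacification: $v_1$'s default hands $v_2$ exactly enough assets to cover its liability, so $v_2$ is non-updatable throughout the latch's collapse and only defaults at the very end, when its default feeds further assets to $v_1$ rather than to the latch, making the ordering after the first update fully deterministic.
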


\begin{proof}

Consider the system in Figure \ref{fig:earlydef_rev}, where only $v_1$ or $v_2$ can report a default initially, since no other node has any liabilities.

Assume that $v_1$ is the first to report a default, updating to $r_{v_1}=0$. This influences the network in two ways: $v_2$ obtains assets of $1$, and $u_2$ now has a new liability of $1$ as a result.

Thus the next update can only be executed by $u_2$, resulting in $r_{u_2}=0$. On the one hand, this provides assets to $u_1$; on the other hand, it creates liabilities for $w_2$. As a result, the next update can only be executed by $w_2$.

When $w_2$ announces $r_{w_2}=0$, this results in more liabilities for the defaulting $u_2$, and more assets for $v_1$. These assets make $v_1$ the only updatable next node, bringing $v_1$ back from its default with $r_{v_1}=1$.

When $v_1$ announces $r_{v_1}=1$, then $u_2$ loses some of its liabilities, and $v_2$ loses its assets. This does not affect $u_2$, which remains at $r_{u_2}=0$ due to the default of $w_2$; however, $v_2$ now also has to report a default. The system finally stabilizes after $v_2$ updates to $r_{v_2}=0$: the assets/liabilities of $v_1$ and $u_1$ are affected, but neither of them has to make an update. Thus the final solution has $r_{v_1}=1$ and $r_{v_2}=0$.

On the other hand, if $v_2$ is the first to report default, then due to the symmetry of the system, the final outcome will have $r_{v_1}=0$ and $r_{v_2}=1$. Note that in both cases, after the first update is executed, the remaining steps are already determined, and no alternative ordering is possible. Hence the only way for $v_1$ to avoid a default in the final outcome is to be the first one to report a default. \qedhere
\end{proof}

We can also show that in general, it is NP-hard to find the best default-reporting strategy for a bank. This even holds if the behavior of the rest of the network is `predictable', i.e. if there is essentially only one ordering that the system can follow. This implies that any interpretation of this problem, e.g. optimizing a bank's best-case payoff or worst-case payoff, is also hard.

\begin{theorem} \label{th:findbesttime}
It is NP-hard to find the time of defaulting that provides the highest payoff to a specific bank in the final outcome.
\end{theorem}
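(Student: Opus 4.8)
The plan is to reduce from SAT. The guiding observation is that if the rest of the network is forced into an essentially unique ordering \emph{and} that ordering has only polynomially many steps, then one could simply simulate each of the polynomially many candidate default times for the distinguished bank $v$ and pick the best, giving a polynomial algorithm. Any hardness proof must therefore exploit a forced ordering of super-polynomial length. I would build the reduction on top of the binary-counter construction of Theorem~\ref{th:binary_counter}, whose forced sequence runs through $2^{\Omega(n)}$ configurations, so that the exponentially many possible default times of $v$ can be made to correspond to the $2^k$ truth assignments of a SAT instance on $k$ variables.

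Concretely, first I would take the counter on $k$ stable bits and expose, at each count value, a block of $k$ ``readout'' banks whose recovery rates encode the current assignment $\sigma$ (say $r_{b_i}=0$ exactly when $x_i$ is currently true). Second, I would attach a clause-evaluation subnetwork built from the clause gadgets of Figure~\ref{fig:maxsat_best}, wired by CDSs to these readout banks, so that a single designated signal bank is solvent precisely when $\sigma$ satisfies the formula $\phi$. Third, I would introduce $v$ together with a one-shot ``latch'' subnetwork that stays inert as long as $v$ has not defaulted, but at the moment $v$ announces a default reads the signal bank: if $\phi(\sigma)$ holds at that instant, it triggers a cascade (in the spirit of Theorem~\ref{th:earlydef_rev}) that ultimately restores $v$ to positive equity, whereas if $\phi(\sigma)$ is false, $v$ is left to be forced into a zero-recovery default by the time the counter completes.

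The counter must be engineered so that it keeps advancing deterministically regardless of whether and when $v$ defaults, making $v$'s choice of default time the only free decision in the whole run, ranging over all $2^k$ configurations. Consequently the best achievable payoff for $v$ is high if and only if some $\sigma$ satisfies $\phi$, and the optimal default time names such a satisfying assignment; computing the best-payoff time therefore decides SAT. Since the ordering is essentially unique once $v$'s default time is fixed, the best-case and worst-case payoffs of $v$ coincide, so the same construction rules out every reasonable interpretation of an optimal strategy at once.

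The main obstacle I expect is the latch: in the reversible model defaults can be undone, so realizing an irreversible record of ``$\phi(\sigma)$ was true when $v$ defaulted'' that survives the remaining exponentially many counter steps, without ever derailing the forced ordering, is delicate. I would handle this by designing the triggering cascade to create a self-sustaining positive-equity configuration for $v$ whose supporting liabilities no longer depend on the transient counter state, so that later counter activity can neither revive nor re-default the relevant banks, mirroring the locking behaviour already used in Theorem~\ref{th:earlydef_rev}. Verifying that the latch, the clause network, and the counter compose without introducing spurious updatable banks --- which would create alternative orderings and break the ``predictable network'' claim --- is the bulk of the remaining work.
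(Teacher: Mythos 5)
Your high-level plan coincides with the paper's: drive a forced exponential-length schedule with the binary counter of Theorem~\ref{th:binary_counter}, let the $k$ counter bits encode a truth assignment, evaluate clauses with gadgets in the style of Figure~\ref{fig:maxsat_best}, and make the payoff of $v$ depend on the assignment that is current when $v$ defaults. Your opening observation --- that hardness is only possible because the forced ordering is super-polynomially long --- is also exactly the point the paper makes when explaining why no analogous simple statement holds in the monotone model.

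However, the one place where you depart from the paper is precisely where the proposal has a genuine gap. You require that the counter ``keeps advancing deterministically regardless of whether and when $v$ defaults,'' and consequently you need a latch that captures the truth value of $\phi(\sigma)$ at the instant of $v$'s default. This creates two problems that your sketch does not resolve. First, clause evaluation is not instantaneous in the sequential model: the readout banks, the clause gadgets and the signal bank each need their own update steps before they reflect the current assignment, and while those steps are pending the counter has already moved to a different assignment, so the claim that ``the signal bank is solvent precisely when $\sigma$ satisfies $\phi$'' only holds in a settled state that the running schedule never reaches. Second, if the clause-evaluation subnetwork is live during the counting, its banks become updatable at many intermediate configurations, which destroys the single-valid-ordering property on which the entire counter construction rests. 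The paper sidesteps both issues by doing the opposite of what you propose: $v$'s default irreversibly trips a stable-bit gadget (banks $w_1,w_2$) whose state is a precondition of the $\textsc{exit}_{\textsc{idle}}$ state, so the counter halts permanently at its current value; the literal and clause gadgets are gated by CDSs in reference to $w_2$ so that they can only begin updating after the counter has stopped, and they then evaluate the frozen assignment in a still-forced order; and $v$'s final assets are simply the sum of unit payments on incoming CDSs referencing the clause nodes, so the payoff equals the number of satisfied clauses (a MAXSAT reduction rather than a binary SAT latch). To repair your argument, drop the requirement that the counter keeps running and instead let $v$'s default freeze it; this eliminates the latch --- which you yourself identify as the unresolved core difficulty --- entirely.
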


\renewcommand{\proofname}{Proof sketch.}

\begin{proof}
The main idea of the proof is to combine the binary counter construction of Theorem \ref{th:binary_counter} with the MAXSAT reduction. That is, given a binary counter on $k=\Theta(n)$ bits, we add a new node $v$ to the system such that

\begin{enumerate}[(a)]
\setlength{\itemsep}{7pt}
\setlength{\parskip}{-5pt}
\item $v$ can choose to default anytime,
\item the default of $v$ terminates the counting process, stabilizing the counter in its current state,
\item $v$ then comes back from its default, and its assets in the final state are proportional to the amount of clauses satisfied in a SAT formula, where the value of the variables is derived from the finalized state of the bits in the counter.
\end{enumerate}
This means that the counter essentially enumerates all the possible value assignments of the variables, and the best defaulting strategy is obtained if counting is terminated at the assignment that satisfies the highest number of clauses. However, finding this assignment is NP-hard.

The details of the construction are discussed in Appendix \ref{App:B}.
\end{proof}

\renewcommand{\proofname}{Proof.}

\begin{figure}
\centering
\minipage{0.63\textwidth}
\centering

\begin{tikzpicture}
	
	\draw[very thick, blue, arrows=-latex] (70pt,30pt) -- (96pt,5.5pt);
	\draw[very thick, blue, arrows=-latex] (70pt,-30pt) -- (96pt,-5.5pt);
	\draw[very thick, brown, arrows=-latex] (10pt,30pt) -- (25pt,36pt) -- (50pt,36pt) -- (62pt,31pt);
	\draw[very thick, brown, arrows=-latex] (10pt,30pt) -- (25pt,24pt) -- (50pt,24pt) -- (62pt,29pt);
	\draw[very thick, brown, arrows=-latex] (10pt,-30pt) -- (25pt,-36pt) -- (50pt,-36pt) -- (62pt,-31pt);
	\draw[very thick, brown, arrows=-latex] (10pt,-30pt) -- (25pt,-24pt) -- (50pt,-24pt) -- (62pt,-29pt);

	\node[anchor=center] at (37.5pt,41pt) {\small $v_2$};
	\node[anchor=center] at (37.5pt,18pt) {\small $w_2$};
	\node[anchor=center] at (37.5pt,-18pt) {\small $v_1$};
	\node[anchor=center] at (37.5pt,-42pt) {\small $w_1$};
	
	\draw[black, fill=white] (10pt,30pt) circle (1.7ex);
	\draw[black, fill=white] (10pt,-30pt) circle (1.7ex);
	\draw[black, fill=white] (70pt,30pt) circle (1.7ex);
	\draw[black, fill=white] (70pt,-30pt) circle (1.7ex);
	\draw[black, fill=white] (100pt,0pt) circle (1.7ex);
	
	\node[anchor=center] at (70.5pt,29.5pt) {\large $v_1$};
	\node[anchor=center] at (70.5pt,-30.5pt) {\large $v_2$};
	
	\draw [fill=white] (12pt,25pt) rectangle (21pt,18pt);
	\node[anchor=center] at (16.5pt,21.5pt) {\scriptsize $\infty$};
	\draw [fill=white] (12pt,-35pt) rectangle (21pt,-42pt);
	\node[anchor=center] at (16.5pt,-38.5pt) {\scriptsize $\infty$};
	
	
	\draw[very thick, brown, arrows=-latex] (140pt,0pt) -- (173pt,17pt);
	\draw[very thick, brown, arrows=-latex] (140pt,-0pt) -- (173pt,-17pt);
	\draw[very thick, brown, arrows=-latex] (180pt,20pt) -- (195pt,26pt) -- (220pt,26pt) -- (232pt,21pt);
	\draw[very thick, brown, arrows=-latex] (180pt,20pt) -- (195pt,14pt) -- (220pt,14pt) -- (232pt,19pt);
	\draw[very thick, brown, arrows=-latex] (180pt,-20pt) -- (195pt,-26pt) -- (220pt,-26pt) -- (232pt,-21pt);
	\draw[very thick, brown, arrows=-latex] (180pt,-20pt) -- (195pt,-14pt) -- (220pt,-14pt) -- (232pt,-19pt);
	\draw[very thick, brown, arrows=-latex] (180pt,45pt) -- (220pt,45pt) -- (236pt,26pt);
	\draw[very thick, brown, arrows=-latex] (180pt,-45pt) -- (220pt,-45pt) -- (236pt,-26pt);
	
	\node[anchor=center] at (156pt,16pt) {\small $u_2$};
	\node[anchor=center] at (157pt,-16pt) {\small $u_1$};
	\node[anchor=center] at (207.5pt,31pt) {\small $v_2$};
	\node[anchor=center] at (207.5pt,8pt) {\small $w_1$};
	\node[anchor=center] at (207.5pt,-8pt) {\small $v_1$};
	\node[anchor=center] at (207.5pt,-32pt) {\small $w_2$};
	\node[anchor=center] at (211pt,50pt) {\small $u_1$};
	\node[anchor=center] at (211pt,-50.5pt) {\small $u_2$};
	
	\draw[black, fill=white] (240pt,20pt) circle (8pt);
	\draw[black, fill=white] (240pt,-20pt) circle (8pt);
	\draw[black, fill=white] (180pt,20pt) circle (8pt);
	\draw[black, fill=white] (180pt,-20pt) circle (8pt);
	\draw[black, fill=white] (140pt,0pt) circle (8pt);
	\draw[black, fill=white] (180pt,45pt) circle (8pt);
	\draw[black, fill=white] (180pt,-45pt) circle (8pt);
	
	\node[anchor=center] at (180.5pt,19.5pt) {\large $u_1$};
	\node[anchor=center] at (180.5pt,-20.5pt) {\large $u_2$};
	\node[anchor=center] at (180.5pt,44.5pt) {\large $w_1$};
	\node[anchor=center] at (180.5pt,-45.5pt) {\large $w_2$};
	
	\draw [fill=white] (142pt,-5pt) rectangle (151pt,-12pt);
	\node[anchor=center] at (146.5pt,-8.5pt) {\scriptsize $\infty$};
	
\end{tikzpicture}
	\caption{Example where early defaulting is the best strategy, with multiple source and sink nodes for a cleaner topology.}
	\label{fig:earlydef_rev}
\endminipage\hfill
\hspace{0.03\textwidth}
\minipage{0.31\textwidth}
\centering

\begin{tikzpicture}
	
	\draw[very thick, blue, arrows=-latex] (0pt,40pt) -- (71pt,40pt);
	\draw[very thick, blue, arrows=-latex] (80pt,40pt) -- (46pt,6pt);
	\draw[very thick, blue, arrows=-latex] (40pt,0pt) -- (6pt,34pt);
	\draw[very thick, blue, arrows=-latex] (0pt,40pt) -- (34pt,74pt);
	
	\draw[black, fill=white] (0pt,40pt) circle (8.5pt);
	\draw[black, fill=white] (40pt,0pt) circle (8.5pt);
	\draw[black, fill=white] (40pt,80pt) circle (8.5pt);
	\draw[black, fill=white] (80pt,40pt) circle (8.5pt);
	
	\node[anchor=center] at (0pt,40pt) {\large $u$};
	\node[anchor=center] at (40pt,0pt) {\large $w$};
	\node[anchor=center] at (80pt,40pt) {\large $v$};
	
\end{tikzpicture}
	\vspace{3pt}
	\caption{Infinite convergence to an equilibrium state.}
	\label{fig:convergence}
\endminipage\hfill
\end{figure}

\section{Achieving Stabilization} \label{sec:monoton}

While the reversible sequential model is realistic from many perspectives, the infinite looping property is clearly not reasonable in real-world systems. As such, it is natural to ask if there is a way to modify the model to avoid this situation, and instead ensure that every financial system stabilizes eventually.

In this section, we investigate the causes of this infinite behavior in the sequential model. We first show that we require more sophisticated update rules to avoid a specific kind of infinite behavior, namely when the system converges to an equilibrium. We then discuss liability freezing, a different (but in some sense also realistic) approach of handling defaulting banks in the network. Finally, we show that if we combine these two modifications, we can obtain a monotone sequential model where our systems always stabilize after polynomially many steps.

\subsection{More sophisticated update rules}

\subparagraph*{Convergence to an equilibrium.}

Since the addition of conditional debt contracts drastically increases the complexity of the model, it is a natural first assumption that such an infinite pattern can only arise if the system contains a CDS. However, this is not the case: we can also obtain a (slightly different kind of) infinite sequence in systems with only regular debts.

Consider the example system in Figure \ref{fig:convergence}. Since bank $u$ has $l_u=2$ and $a_u=1$ initially, it can begin by updating its recovery rate to $r_u=\frac{1}{2}$. As a result, $v$ and $w$ must also announce recovery rates of $r_v=r_w=\frac{1}{2}$. With $a_u=\frac{1}{2}$, bank $u$ now has to update to $r_u=\frac{1}{4}$, which then gives $r_v=r_w=\frac{1}{4}$. Each such round prompts another round of updates, slowly converging to $r_u=r_v=r_w=0$. While this is indeed the only equilibrium of the system, the process takes infinitely many steps to reach this state.

\subparagraph*{Explicit computation of equilibria.}

Such a convergence process can easily occur in any network with cycles; as real-world financial systems are also known to contain cycles \cite{cycles}, we can easily encounter such a situation in practice. In this case, it seems
that a financial authority (or the banks involved) have no other option than to explicitly compute this equilibrium, and set their recovery rates to the appropriate values.

Fortunately, it is known that in case of fixed liabilities in the network (i.e. only simple debts), this is computationally feasible: there always exists a single maximal solution that is simultaneously best for all banks, and this solution can be found in polynomial time \cite{veraart}, essentially by repeatedly solving a system of linear equations. Thus an authority could indeed find this solution, and banks could directly update to these recovery rates in order to skip the convergence steps.

This allows us to introduce the notion of \textit{smart updates}: after each updating step, we can consider the current liabilities in the network fixed, and we assume that the equilibrium of the system is computed under these liabilities (essentially reducing the convergence process to a single step). This equilibrium defines a \textit{tentative recovery rate} for each bank $v$, denoted by $\overline{r_v}$. In smart updates, we assume that whenever $v$ executes an update, it always updates to $r_v:=\overline{r_v}$.

In the example of Figure \ref{fig:convergence}, this means that the tentative recovery rates $\overline{r_u}=\overline{r_v}=\overline{r_w}=0$ are already computed initially, and thus any bank executing an update will immediately set its recovery rate to $0$. This way the process already stabilizes after each bank has executed one update. In general, we achieve stabilization in this setting when $r_v=\overline{r_v}$ for each bank $v$ in the network.

While the explicit computation of equilibria may seem artificial, in practice, defaulting banks are often subject to more thorough supervision by the authorities. As such, it is not so unrealistic that the situation of a defaulting bank $v$ is first analyzed by an authority, and this analysis determines the official recovery rate of $v$.

Also, recall that while equilibria are easy to find in debt-only networks, the introduction of CDSs changes this picture entirely. With CDSs, there can easily be multiple equilibria that are Pareto-optimal, and finding any of them is already a PPAD-hard problem \cite{base2}. Thus this explicit computation of $\overline{r_v}$ is only possible for a single step of the process, when we consider the current payment obligation on each CDS fixed. As defaults rarely happen simultaneously in practice, it can indeed be realistic to assume that we can analyze the current (fixed) liabilities in the network after each new update.

Finally, note that smart updating is not yet enough to avoid an infinite convergence. In the system shown in Figure \ref{fig:other}, $v$ can initially fulfill its obligations, while $u$ must update to $r_u=\frac{1}{2}$. This creates new liabilities of $2$ for $v$, leading to the tentative recovery rates $\overline{r_v}=\frac{2}{3}$ and thus $\overline{r_u}=\frac{1}{3}$ after this first step. If $u$ updates again (to $r_u=\frac{1}{3}$), then the liability on the CDS again increases, and thus the next computed equilibrium has an even lower $\overline{r_u}$.

Each step of this process provides new tentative recovery rates, obtained as $\overline{r_v}=\frac{2}{5-4 \cdot r_u}$ and $\overline{r_u}=\frac{\overline{r_v}}{2}=\frac{1}{5-4 \cdot r_u}$. This results in an infinite convergence to the equilibrium $r_v=\frac{1}{2}$, $r_u=\frac{1}{4}$. Note that we can observe this behavior regardless of whether $v$ ever updates its recovery rate to the new $\overline{r_v}$ value; the assets of $u$ are calculated independently of the recovery rate reported by $v$.

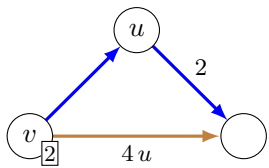
\begin{figure}
\centering
    \vspace{10pt}

\begin{tikzpicture}
	
	\draw[very thick, brown, arrows=-latex] (0pt,0pt) -- (71pt,0pt);
	\draw[very thick, blue, arrows=-latex] (40pt,40pt) -- (74pt,6pt);
	\draw[very thick, blue, arrows=-latex] (0pt,0pt) -- (34pt,34pt);
	
	\node[anchor=north] at (40pt,0pt) {\small $4 \, u$};
	\node[anchor=center] at (64pt,26pt) {\small $2$};
	
	\draw[black, fill=white] (0pt,0pt) circle (8.5pt);
	\draw[black, fill=white] (80pt,0pt) circle (8.5pt);
	\draw[black, fill=white] (40pt,40pt) circle (8.5pt);
	
	\node[anchor=center] at (0pt,0pt) {\large $v$};
	\node[anchor=center] at (40pt,40pt) {\large $u$};
	
	\draw [fill=white] (4.5pt,-2pt) rectangle (10.5pt,-11pt);
	\node[anchor=center] at (7.5pt,-6.5pt) {\footnotesize $2$};
	
\end{tikzpicture}
	\vspace{4pt}
	\caption{Example of infinite convergence in a network, even in case of smart updates. Recall that the label $4u$ on the CDS describes a payment obligation of $4 \cdot (1-r_u)$.}
	\label{fig:other}
\end{figure}

\subparagraph*{Optimistic updates.}

Another natural variant of smart updates is the \textit{optimistic update} rule. To avoid the convergence phenomenon of Figure \ref{fig:convergence}, this setting also assumes that the system is analyzed by an authority after each update. However, defaulting and non-defaulting nodes are now handled in a different manner in this analysis. More specifically, if a bank $v$ is not in default (it has $r_v=1$ currently), then it is given the benefit of a doubt: we assume that it can fulfill its obligations, regardless of how many assets it currently has. On the other hand, banks in default are handled the same way as in case of smart updates.

This distinction can indeed be realistic: if $v$ is non-defaulting, then $a_v$ might not even be known to other banks, so the creditors of $v$ have no better option than to assume that they will receive all payments from $v$. On the other hand, the assets of defaulting banks are under more thorough scrutiny in most legal frameworks.

Formally, optimistic update means that after each step of the process, we use a modified version of the liability network to compute the equilibrium. Whenever there is a contract of current weight $\delta$ from $u$ to $v$ with $r_u=1$, then we remove this contract from the network, and instead (i) we add a new debt of weight $\delta$ from $u$ to an artificial sink node $s$, ensuring that $u$ still has this liability, and (ii) we increase the value of $e_v$ by $\delta$, ensuring that $v$ always has these assets. In contrast, if $r_u<1$, we do not execute any changes on the outgoing contracts. This modified network ensures that until a bank reports a default, its lack of assets does not affect its creditors. We then use the same algorithm of \cite{veraart} to find the equilibrium in this modified system, and set the next tentative recovery rates accordingly.

Revisiting the system in Figure \ref{fig:other}, we see that bank $u$ can again first update to $r_u=\frac{1}{2}$, which results in $\overline{r_v}=\frac{2}{3}$. However, with optimistic updates, $u$ cannot make an update again: until $v$ adjusts its recovery rate to this new value, the tentative recovery rate of $u$ remains $\frac{1}{2}$, since we still expect to get the entire payment from the non-defaulting $v$. Note, however, that optimistic updating still does not prevent an infinite convergence in this system if, for example, $u$ and $v$ keep on updating alternatingly.

\subsection{Liability freezing}

We have seen that neither smart nor optimistic updating prevents an infinite sequential process in itself. For this, we also need to change another aspect of our model, namely how the contracts of $v$ are handled once $v$ goes into default.

Debts are rather simple from this perspective: they describe a previously established payment obligation in the network, so there is no incentive to change them if $v$ defaults.

CDSs, however, pose a more complicated question, since they describe payment obligations that are dynamically changing. So far, we assumed that even after $v$ defaults, the payment obligations on its CDSs keep changing as the reference entities are updated. Another possible approach is to assume \textit{liability freezing}: whenever $v$ goes into default, the liabilities on any incoming or outgoing CDS are fixed at the current value for the rest of the process. That is, a CDS with weight $\delta$ and reference entity $w$ at time $t$ is essentially converted into a simple debt contract with weight $\delta \cdot (1-r_w\,\!\!^{(t)})$, and this weight does not change in the future, even if $r_w$ is updated.

This can be realistic when there is a larger time difference between subsequent defaults: by the time the next default happens, the previous bank has already completed the first phase of the insolvency process, and its incoming/outgoing payments have been established and fixed. Indirectly, such a framework suggests that if $v$ defaults, then it is expected to immediately `cash in' its incoming debts and fulfill its payment obligations, and not wait for a more favorable situation.

The main advantage of this approach is that if we combine liability freezing with optimistic updates, it provides a \textit{monotone sequential model} where recovery rates can only decrease throughout the process. Intuitively, when a bank $w$ makes an update, then CDSs in reference to $w$ could only provide more assets to a bank $v$ if we still have $r_v=1$, as otherwise the liability on the CDS is already fixed. However, if $r_v=1$, then the optimistic approach assumes anyway that $v$ can pay its liabilities, and thus the update has no effect on other banks in the system.

This monotonic property ensures that any system stabilizes eventually in this model; on the other hand, it also means that once a bank $v$ announces a default in this model, it has no possibility to reverse this default in the future, and its recovery rate can only get smaller with further updates.

By revisiting Figure \ref{fig:infinite}, we can observe that both liability freezing and optimistic updates are crucial ingredients to achieve this monotonicity. Without liability freezing, the system loops infinitely if $u$ and $v$ make updates in an alternating fashion, both with smart and with optimistic updates. On the other hand, if we combine liability freezing with smart updates, then $v$ can still alternate between $r_v=0$ and $r_v=1$ indefinitely; if $u$ never makes an update, then the liability on the CDS will never be fixed at a specific value.

\subsection{Stabilization in the monotone model} \label{sec:monprop}

We now discuss the main properties of the monotone model. We first show that the model indeed ensures an eventual stabilization for any ordering. The key observation for this is that the recovery rate of banks can never increase in this model.

\begin{theorem} \label{th:monotone}
The recovery rate of a bank can only decrease in the monotone model.
\end{theorem}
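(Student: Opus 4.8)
The plan is to prove the statement by induction on the discrete time steps, maintaining the invariant that the actual recovery rate vector $r^{(t)}$ is coordinate-wise non-increasing in $t$. Since a single step only changes the recovery rate of the updating bank $v$ (setting $r_v := \overline{r_v}$), it suffices to show that every update is a decrease, i.e. $\overline{r_v} \le r_v$ at the moment $v$ updates. First I would reduce this to a statement purely about the tentative rates: if the tentative rate $\overline{r_v}$ of a bank $v$ is non-increasing over every time interval during which $v$ is in default, then each update is a decrease. Indeed, when $v$ updates at time $t$, its current rate $r_v^{(t-1)}$ equals the tentative value it adopted at its previous update (or $1$ if it has never defaulted), and the claimed monotonicity of $\overline{r_v}$ gives $\overline{r_v}^{(t-1)} \le r_v^{(t-1)}$.

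The heart of the argument is to show that the tentative rates of defaulting banks can only decrease. Here I would classify the banks at each step into the set $N$ of non-defaulting banks (current $r_u = 1$, whose outgoing contracts are optimistically rerouted to the sink) and the set $D$ of defaulting banks (current $r_u < 1$, whose incoming and outgoing CDSs are already frozen). The two key structural facts to establish are: (i) the only rate-dependent contract weights in the modified network are those of \emph{live} CDSs, and a live CDS (whose weight $\delta\cdot(1-r_w)$ still varies) must have both its debtor and its creditor in $N$, since a default at either endpoint would have frozen it; and (ii) consequently, every contract incoming to a bank in $D$ has a fixed (frozen or simple-debt) weight. Fact (ii) is exactly the formalization of the paper's intuition: when some $w$ updates, a CDS in reference to $w$ can supply additional assets only to a creditor in $N$, and such a bank reroutes everything to the sink (and already has recovery rate $1$), so the increase never reaches a defaulting bank nor raises any actual recovery rate.

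With this in place, I would track the assets flowing into the banks of $D$ and invoke the monotonicity of the Eisenberg--Noe/Veraart maximal clearing vector \cite{veraart}, namely that decreasing external assets or increasing liabilities can only decrease the clearing vector. The assets reaching a bank $v \in D$ are non-increasing over time because each contributing contract has a frozen weight and its contribution only drops: a payer in $N$ contributes the full frozen weight, a transition of that payer from $N$ to $D$ replaces this full contribution by $\overline{r_u}$ times the weight, and a payer already in $D$ contributes $\overline{r_u}$ times the weight with $\overline{r_u}$ itself non-increasing by induction. Since the modified network restricted to $D$ therefore only becomes less favorable, Veraart's monotonicity yields that the clearing vector restricted to $D$ is non-increasing, which is precisely the tentative-rate monotonicity required by the reduction. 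One still has to check the transition step itself, where $v$ moves from $N$ to $D$: here $v$'s own total liabilities are unchanged (they merely stop being routed to the sink) and its incoming assets do not increase, so $\overline{r_v}$ does not increase at the transition.

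I expect the main obstacle to be the global, fixed-point nature of the tentative rates: because $\overline{r}$ is an equilibrium, the effect of one update on a bank in $D$ can propagate around cycles, so the monotonicity cannot be read off locally and must be obtained from the clearing-vector monotonicity of \cite{veraart} applied to the whole modified network (with the sink included). The delicate bookkeeping is to verify rigorously that the two model modifications interlock as claimed --- liability freezing guaranteeing fixed incoming weights for $D$, and optimistic rerouting confining the one asset-increasing channel (live CDSs in reference to the updated bank) to $N$, where it is absorbed by the sink --- so that no update ever improves the position of a defaulting bank.
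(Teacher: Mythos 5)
Your proposal is correct and follows essentially the same route as the paper's proof: both arguments rest on the same two structural facts (liability freezing confines the rate-dependent CDS weights to contracts whose debtor and creditor are both non-defaulting, and optimistic rerouting absorbs the resulting asset changes at the sink so they never reach a defaulting bank), and both handle the defaulting/redirection step via monotonicity of the maximal clearing vector. The only cosmetic difference is that you run a forward induction and cite the Veraart monotonicity explicitly, whereas the paper argues by contradiction from the first step that would increase a defaulting bank's tentative rate and appeals directly to the maximality of the computed equilibrium.
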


\begin{proof}
The main idea is to show that for any bank $v$, $\overline{r_v}$ can only increase if we still have $r_v=1$ currently. This shows that we can never have $\overline{r_v} > r_v$, and thus no update can increase $r_v$.

Assume that node $w$ updates $r_w$ in a specific step, and assume for contradiction that this is the first step that increases $\overline{r_v}$ for some bank $v$ with $r_v < 1$. This means that the current update is still a decrease of $r_w$, since we must have $\overline{r_w} < r_w$. The update of $r_w$ can have two kinds of effects on the system: it can change the liabilities on CDSs that are in reference to $w$, and it can result in a lower amount of assets for the creditors of $w$. We analyze these two effects separately.

Since the monotone model has liability freezing, the liability on a CDS from $u$ to $v$ (in reference to $w$) can only change if we currently still have $r_u=r_v=1$. Thus while this extra payment may increase $\overline{r_v}$, we will still have $\overline{r_v} \leq r_v$ afterwards. Since the model uses optimistic updates and $r_u=r_v=1$, both $u$ and $v$ only have debts towards the artificial sink $s$ in the input graph of the equilibrium algorithm (which computes the tentative recovery rates), so the changes to $\overline{r_u}$ and $\overline{r_v}$ do not affect the tentative recovery rate of any other node.

As for the creditors of $w$, we consider two cases. If this is not a defaulting step (we already had $r_w<1$ before the update), then updating $r_w$ does not change the liabilities in the input graph of the equilibrium algorithm (apart from the case of some non-defaulting nodes, as discussed above), so the tentative recovery rates will remain unchanged.

On the other hand, if this is a defaulting step, then the outgoing debts of $w$ will now be redirected from $s$ to the actual creditors of $w$. However, this operation can only result in less assets for a bank. More specifically, one can observe that any configuration of payments in this new graph is also a valid configuration of payments in the original graph before the redirection step. Hence if the $\overline{r_v}$ value of any bank $v$ increases with this step, then this contradicts the fact that the previous $\overline{r_v}$ was obtained from a maximal equilibrium of the system.
\end{proof}

\begin{theorem} \label{th:upper_quad}
The monotone model allows at most $n$ defaulting and $O(n^2)$ updating steps.
\end{theorem}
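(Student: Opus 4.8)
The plan is to combine the monotonicity of Theorem~\ref{th:monotone} with the observation that, between two successive defaults, the liabilities that matter are completely frozen, so that within such an interval the system behaves like a fixed-liability network whose clearing vector is reached in a single smart/optimistic update per bank.

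First I would record the two immediate consequences of monotonicity. Since $r_v$ can only decrease and a defaulting step takes $r_v$ from $1$ to a value below $1$, every bank can default at most once; hence there are at most $n$ defaulting steps, and the set $D$ of banks that are currently in default only grows during the process. I would then split the whole run into \emph{phases}, that is, maximal time intervals over which $D$ stays constant. Because $|D|$ grows by exactly one at each defaulting step and never shrinks, there are at most $n+1$ phases.

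The core of the argument is to show that inside a single phase each bank performs at most one update, which I would establish in two parts. For a non-defaulting bank ($r_v=1$), being updatable means $\overline{r_v}<1$, so executing its update lowers $r_v$ below $1$ and is itself a defaulting step; by the definition of a phase this ends the current phase, so non-defaulters never perform an adjustment \emph{within} a phase. For a bank $v\in D$, I would argue that its tentative rate $\overline{r_v}$ is constant throughout the phase. Here I use that, under liability freezing, every CDS incident to a bank of $D$ is frozen at a fixed weight and all simple debts have fixed weights, while under the optimistic rule every contract leaving a non-defaulted bank is redirected to the sink $s$ and contributes only a fixed amount of external assets to its creditor in $D$. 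Consequently the only quantities still varying are the payments flowing between banks of $D$, and these depend only on the rates of banks in $D$; the subsystem on $D$ is therefore a fixed-liability network with fixed external inputs, decoupled from the non-defaulters. Its maximal clearing vector is unique and independent of the currently announced rate vector \cite{veraart}, so the value $\overline{r_v}$ returned by the equilibrium computation is identical at every step of the phase. Once $v$ updates to this fixed value it is no longer updatable, giving at most one adjustment per bank of $D$ per phase.

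Finally I would add up the counts: at most $n$ defaulting steps overall, plus at most one adjustment per bank per phase, i.e. at most $(n+1)\cdot n$ adjustment steps, for a total of $O(n^2)$ updating steps. The step I expect to be the main obstacle is precisely the constancy of $\overline{r_v}$ for $v\in D$ inside a phase: it requires checking each type of incident contract (simple debt, incoming and outgoing CDS, in both the defaulted and the non-defaulted cases of the optimistic redirection) to confirm that nothing feeding the $D$-subsystem depends on a rate outside $D$, and then invoking the uniqueness of the clearing vector for the resulting fixed network.
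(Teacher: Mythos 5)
Your proof is correct and follows essentially the same route as the paper: monotonicity gives at most $n$ defaulting steps, and between two consecutive defaulting steps the tentative rates of already-defaulted banks cannot change, so each bank adjusts at most once per such interval, yielding $O(n^2)$ updates in total. The only difference is presentational: where the paper simply cites the case analysis from the proof of Theorem~\ref{th:monotone} for the constancy of $\overline{r_v}$ on defaulted banks, you re-derive it directly by observing that liability freezing plus optimistic redirection decouple the defaulted subsystem into a fixed-liability network with fixed external inputs, whose unique maximal clearing vector does not depend on the announced rates.
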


\begin{proof}
Since recovery rates are always decreasing, every bank can default at most once, thus the number of defaulting steps is at most $n$.

For the $O(n^2)$ upper bound, we show that there are at most $n$ updating steps between any two consecutive defaulting steps. This is rather straightforward: recall from the proof of Theorem \ref{th:monotone} that if bank $w$ executes a non-defaulting update, then this can only change the value of $\overline{r_v}$ for banks $v$ that are not in default. Thus for any bank $v$ in default, $\overline{r_v}$ can not change between two defaulting steps of the process. This means that any bank can execute at most $1$ updating step between two consecutive defaulting steps, limiting the number of steps in this period to $n$.
\end{proof}

Furthermore, we point out that this upper bound is asymptotically tight: we can easily construct a system and an ordering that indeed lasts for $\Omega(n^2)$ steps in the monotone model. This construction does not even require CDSs in the network; it only contains simple debt contracts.

\begin{lemma} \label{lem:longstab}
There is a system with an ordering that lasts for $\Omega(n)$ defaulting and $\Omega(n^2)$ updating steps.
\end{lemma}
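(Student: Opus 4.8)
The plan is to exhibit a debt-only network, together with an ordering, in which a single \emph{hub} bank is driven through $\Omega(n)$ separate value decreases, and in which each such decrease must be echoed by $\Omega(n)$ downstream banks before the next one can happen. Concretely, I would introduce $k$ \emph{trigger} banks $t_1,\dots,t_k$, each with no external assets and a single unit debt to a common hub $h$; a set of $m$ \emph{follower} banks $f_1,\dots,f_m$, each receiving a unit debt from $h$ and owing a unit to a passive sink $z$ (a node with no outgoing contracts, so $r_z=1$ forever); and I would tune the hub's external assets so that $e_h+k=m$ (e.g.\ $m=2k$, $e_h=k$), giving $n=\Theta(k)=\Theta(m)$. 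Since each trigger has assets $0$ and a liability of $1$, every $t_j$ is updatable from the very start with $\overline{r_{t_j}}=0$, independently of the rest of the network.

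The ordering I would use is phase-based: in phase $j$ I first let $t_j$ default, and then let $h$ and all followers catch up to their new tentative rates. The key computation is that once $j$ triggers have defaulted, the optimistic equilibrium gives the hub assets $e_h+(k-j)=m-j$ against a liability of $m$, so $\overline{r_h}=\frac{m-j}{m}$, which is strictly decreasing and stays positive throughout (as $k<m$). Because each follower only owes a unit to the sink, its tentative rate tracks the hub exactly, $\overline{r_{f_i}}=\frac{m-j}{m}$. Hence after the first phase each later phase $j\ge 2$ consists of one trigger default followed by one non-defaulting update of $h$ and one non-defaulting update of each of the $m$ followers. This yields $\Omega(n)$ defaulting steps (the $k$ triggers, plus $h$ and the $m$ followers each defaulting once in phase $1$) and $(k-1)(m+1)=\Omega(n^2)$ non-defaulting updates, matching the upper bound of Theorem~\ref{th:upper_quad}.

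The point that needs care, and which I expect to be the main obstacle, is the interaction with the optimistic redirection rule in the \emph{first} phase. While $h$ still has $r_h=1$, its debts to the followers are redirected to the sink and the followers are optimistically credited a full unit, so immediately after $t_1$ defaults only $\overline{r_h}$ drops below $1$ and the followers are not yet updatable; they become updatable only once $h$ itself has executed its defaulting update and its debts are re-attached to the real creditors. I would therefore track the announced (as opposed to tentative) recovery rates carefully through each phase to confirm that exactly the hub and then the followers are forced to react, that each reaction strictly changes a tentative rate (so the update is genuinely required rather than vacuous), and that no other bank---in particular the independent triggers and the passive sink---ever becomes updatable out of turn. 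Once this bookkeeping is settled, Theorem~\ref{th:monotone} guarantees the process remains monotone, and the two counts above establish the claimed $\Omega(n)$ and $\Omega(n^2)$ bounds.
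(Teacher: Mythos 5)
Your proposal is correct and follows essentially the same construction as the paper: a hub whose $\Theta(n)$ debtor banks default one at a time, each default forcing one non-defaulting update of the hub and of each of its $\Theta(n)$ creditors, with the same phase-based ordering (your triggers, hub, and followers play exactly the roles of the paper's $w_i$, $v$, and $u_i$). The point you flag about optimistic redirection in the first phase is real but resolves exactly as you describe, so no gap remains.
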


\begin{proof}
Let $m$ be a parameter with $m=\Theta(n)$, and consider Figure \ref{fig:longstab}. All the banks $w_1$, ..., $w_m$ will eventually report a default in this system, so the number of defaulting steps is indeed $m=\Omega(n)$.

Let $w_1$, ..., $w_m$ report a default in this order throughout the process. After $w_i$ has reported a default, bank $v$ can always decrease its recovery rate to a new value of $r_v=\frac{m-i}{m}$. Finally, after each such update of $v$, assume that all the nodes $u_1$, ..., $u_m$ make an update step, also announcing a new recovery rate of $\frac{m-i}{m}$; they can indeed all do this due to the update executed by $v$. This ordering has $\Omega(m^2)=\Omega(n^2)$ updating steps altogether.
\end{proof}

\begin{figure}
\centering
\minipage{0.32\textwidth}
\centering

\begin{tikzpicture}
	
	\draw[very thick, blue, arrows=-latex] (50pt,0pt) -- (5pt,35pt);
	\draw[very thick, blue, arrows=-latex] (50pt,0pt) -- (32pt,33pt);
	\draw[very thick, blue] (50pt,0pt) -- (53pt,19pt);
	\draw[very thick, blue] (50pt,0pt) -- (61pt,19pt);
	\draw [white, fill=white] (50pt,18pt) rectangle (64pt,21pt);
	\draw[very thick, blue, arrows=-latex] (50pt,0pt) -- (95pt,35pt);
	
	\draw[very thick, blue, arrows=-latex] (0pt,40pt) -- (45pt,75pt);
	\draw[very thick, blue, arrows=-latex] (30pt,40pt) -- (48pt,73pt);
	\draw[very thick, blue, arrows=-latex] (53pt,58pt) -- (51pt,72pt);
	\draw[very thick, blue, arrows=-latex] (61pt,58pt) -- (53pt,73pt);
	\draw [white, fill=white] (50pt,59pt) rectangle (64pt,56pt);
	\draw[very thick, blue, arrows=-latex] (100pt,40pt) -- (55pt,75pt);
	
	\draw[very thick, blue, arrows=-latex] (0pt,-40pt) -- (45pt,-5pt);
	\draw[very thick, blue, arrows=-latex] (30pt,-40pt) -- (48pt,-7pt);
	\draw[very thick, blue, arrows=-latex] (53pt,-22pt) -- (51pt,-8pt);
	\draw[very thick, blue, arrows=-latex] (61pt,-22pt) -- (53pt,-7pt);
	\draw [white, fill=white] (50pt,-21pt) rectangle (64pt,-24pt);
	\draw[very thick, blue, arrows=-latex] (100pt,-40pt) -- (55pt,-5pt);
	
	\draw[black, fill=white] (50pt,0pt) circle (8.5pt);
	\draw[black, fill=white] (0pt,40pt) circle (8.5pt);
	\draw[black, fill=white] (30pt,40pt) circle (8.5pt);
	\node[anchor=center] at (65pt,36pt) {\Large ...};
	\draw[black, fill=white] (100pt,40pt) circle (8.5pt);
	
	\draw[black, fill=white] (50pt,80pt) circle (8.5pt);
	
	\draw[black, fill=white] (0pt,-40pt) circle (8.5pt);
	\draw[black, fill=white] (30pt,-40pt) circle (8.5pt);
	\node[anchor=center] at (65pt,-44pt) {\Large ...};
	\draw[black, fill=white] (100pt,-40pt) circle (8.5pt);
	
	\node[anchor=center] at (50pt,0pt) {\normalsize $v$};
	\node[anchor=center] at (0pt,39.5pt) {\normalsize $u_1$};
	\node[anchor=center] at (30.5pt,39.5pt) {\normalsize $u_2$};
	\node[anchor=center] at (101pt,39.5pt) {\normalsize $u_m$};
	\node[anchor=center] at (0.5pt,-40.5pt) {\normalsize $w_1$};
	\node[anchor=center] at (30.5pt,-40.5pt) {\normalsize $w_2$};
	\node[anchor=center] at (101pt,-40.5pt) {\normalsize $w_m$};
	
\end{tikzpicture}
	\vspace{5pt}
	\caption{Example system for $\Theta(n^2)$ stabilization time in the monotone model.}
	\label{fig:longstab}
\endminipage\hfill
\hspace{0.07\textwidth}
\minipage{0.6\textwidth}
\centering
	\vspace{12pt}

\begin{tikzpicture}

	\draw[very thick, blue, arrows=-latex] (50pt,20pt) -- (92pt,20pt);
	\draw[very thick, blue, arrows=-latex] (50pt,-20pt) -- (95pt,14pt);
	\draw[very thick, brown, arrows=-latex] (0pt,0pt) -- (42pt,18pt);
	\draw[very thick, brown, arrows=-latex] (0pt,0pt) -- (42pt,-18pt);
	\draw[very thick, blue, arrows=-latex] (50pt,70pt) -- (50pt,28pt);
	\draw[very thick, brown, arrows=-latex] (50pt,70pt) -- (95pt,26pt);
	
	\node[anchor=center] at (20.5pt,15pt) {\small $v_2$};
	\node[anchor=center] at (22.5pt,-17.5pt) {\small $4_{\,}v_1$};
	\node[anchor=center] at (72pt,26pt) {\small $4$};
	\node[anchor=center] at (44pt,45pt) {\small $3$};
	\node[anchor=center] at (83pt,51pt) {\small $6\,u_1$};
	
	\draw[black, fill=white] (0pt,0pt) circle (8pt);
	\draw[black, fill=white] (50pt,20pt) circle (8pt);
	\draw[black, fill=white] (50pt,-20pt) circle (8pt);
	\draw[black, fill=white] (100pt,20pt) circle (8pt);
	\draw[black, fill=white] (50pt,70pt) circle (8pt);

	\node[anchor=center] at (50.5pt,19.5pt) {\large $v_1$};
	\node[anchor=center] at (50.5pt,-20.5pt) {\large $v_2$};
	\node[anchor=center] at (50pt,70pt) {\large $w$};
	
	\draw [fill=white] (2pt,-4pt) rectangle (11pt,-11pt);
	\node[anchor=center] at (6.5pt,-7.5pt) {\scriptsize $\infty$};
	\draw [fill=white] (53.5pt,67pt) rectangle (59.5pt,58pt);
	\node[anchor=center] at (56.5pt,62.5pt) {\footnotesize $3$};

	
	\draw[very thick, brown, arrows=-latex] (180pt,40pt) -- (213pt,23.5pt);
	\draw[very thick, brown, arrows=-latex] (180pt,0pt) -- (213pt,16.5pt);
	\draw[very thick, brown, arrows=-latex] (140pt,20pt) -- (172pt,38pt);
	\draw[very thick, brown, arrows=-latex] (140pt,20pt) -- (172pt,2pt);
	
	\node[anchor=center] at (156pt,36.5pt) {\small $u_2$};
	\node[anchor=center] at (157pt,3.5pt) {\small $u_1$};
	\node[anchor=center] at (201pt,36pt) {\small $v_2$};
	\node[anchor=center] at (201pt,3pt) {\small $v_1$};
	
	\draw[black, fill=white] (140pt,20pt) circle (8pt);
	\draw[black, fill=white] (180pt,40pt) circle (8pt);
	\draw[black, fill=white] (180pt,0pt) circle (8pt);
	\draw[black, fill=white] (220pt,20pt) circle (8pt);

	\node[anchor=center] at (180.5pt,39.5pt) {\large $u_1$};
	\node[anchor=center] at (180.5pt,-0.5pt) {\large $u_2$};
	
	\draw [fill=white] (142pt,16pt) rectangle (151pt,9pt);
	\node[anchor=center] at (146.5pt,12.5pt) {\scriptsize $\infty$};
	
\end{tikzpicture}
	\caption{Example system where early defaulting is the best strategy in the monotone model.}
	\label{fig:earlydef_mon}
\endminipage\hfill
\end{figure}

\subsection{Defaulting strategies}

Finally, we discuss how the monotone model compares to the reversible model in terms of defaulting strategies.

When finding the globally best ordering, the two models turn out to be very similar. In fact, our proofs from Section \ref{sec:reversible_best} can also be carried over to the monotone model without any changes.

\begin{corollary}
Lemma \ref{lem:branching_best} and Theorems \ref{th:best_ordering_hard} and \ref{th:worst_ordering_hard} also hold in the monotone model.
\end{corollary}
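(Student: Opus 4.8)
The plan is to argue that the three constructions from Section~\ref{sec:reversible_best} behave \emph{identically} in the monotone model, so that the same reductions apply verbatim. All three are assembled from only two building blocks: the branching gadget of Figure~\ref{fig:branch} (used to encode variables) and the clause gadgets of Figures~\ref{fig:maxsat_best} and~\ref{fig:maxsat_worst}, together with infinite-asset source nodes and plain sink nodes. Since Theorem~\ref{th:monotone} guarantees that recovery rates can only decrease in the monotone model, no bank can ever return from a default there; the key point is therefore to check that the reversible analysis of these gadgets also never relies on a reversal, and that the optimistic update rule delivers the same assets to the relevant banks as the reversible rule does.

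First I would re-examine the branching gadget. Both $u$ and $v$ start at recovery rate $1$ with no assets and a unit debt, so both are updatable, exactly as in the reversible model. Whichever of the two, say $u$, updates first drops to $r_u=0$; under optimistic updates the now-active CDS in reference to $u$ (whose debtor is the infinite-asset source) contributes its full weight of $1-r_u=1$ to $v$, raising $\overline{r_v}$ back to $1$. Hence $v$ is no longer updatable and the gadget stabilizes with exactly one defaulting bank, determined solely by which bank updated first---identical to the reversible behavior. The same check applies to the clause gadgets: once the literal banks are resolved, the central bank $v$ receives assets (Figure~\ref{fig:maxsat_best}) or incurs a liability (Figure~\ref{fig:maxsat_worst}) precisely according to whether a referenced literal has defaulted, and in either case it either defaults once or remains solvent, with no feedback that could trigger a reversal.

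The one subtlety to handle carefully---and the step I expect to be the main obstacle---is to confirm that liability freezing and optimistic updates do not perturb these asset flows. For this I would lean on the fact that the reductions prescribe evaluating all variable gadgets before any clause gadget: by the time a clause bank updates, every referenced literal already has its final recovery rate, so freezing the clause CDSs at that moment fixes them at exactly the values used in the reversible analysis. Moreover, every bank that supplies assets in these constructions is either an infinite-asset source (which always pays in full, so the optimistic assumption is exact) or a bank that genuinely stays solvent; thus the tentative recovery rates computed in the monotone model coincide with the actual recovery rates of the reversible model on every relevant configuration.

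Putting these observations together, each construction realizes the same correspondence between ordering choices and final sets of defaulting banks in both models. Consequently the branching behavior of Lemma~\ref{lem:branching_best} and the two MAXSAT reductions of Theorems~\ref{th:best_ordering_hard} and~\ref{th:worst_ordering_hard} transfer to the monotone model unchanged, with the counts $k+m-\textsc{opt}$ and $k+\textsc{opt}$ arising exactly as before.
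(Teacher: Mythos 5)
Your proposal is correct and takes essentially the same approach as the paper, which simply asserts that the constructions of Section~\ref{sec:reversible_best} carry over to the monotone model without changes; you additionally verify the gadget-level details (that the branching and clause gadgets never rely on default reversal, and that liability freezing and optimistic updates leave the relevant asset flows intact), which is a sound elaboration of the paper's one-line justification.
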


In terms of individual defaulting strategies, the branching gadget again provides a simple example where late defaulting is beneficial: by delaying their updates, banks $u$ and $v$ can again entirely avoid a default.

However, early defaulting is a more difficult question in this setting. In particular, we cannot hope for a result that is analogous to Theorem \ref{th:earlydef_rev}, since once a bank reports a default, there is no way to reverse this in the future. Nonetheless, early defaulting can still be a beneficial strategy in the monotone model: there are cases when a bank cannot avoid an eventual default in any way, but early defaulting can still allow the bank to have a higher recovery rate in the final state.

\begin{theorem} \label{th:earlydef_mon}
There exists a system where a bank $v$ only obtains its highest possible recovery rate in the final state of the system if $v$ is the first bank to report a default.
\end{theorem}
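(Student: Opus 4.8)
The plan is to exhibit the concrete system in Figure~\ref{fig:earlydef_mon} and to argue that the bank $v_1$ ends the process with recovery rate $\frac{3}{4}$ exactly when it is the first to default, and with the strictly smaller rate $\frac{1}{2}$ in every other ordering. Since recovery rates can only decrease in the monotone model (Theorem~\ref{th:monotone}), $v_1$ cannot avoid defaulting under any ordering, so the statement is the natural analogue of Theorem~\ref{th:earlydef_rev} with ``avoiding a default'' replaced by ``securing the highest recovery rate''. The first step is to observe that at the start the only updatable banks are $v_1$ (with assets $3$ from $w$ against a debt of $4$) and $v_2$ (with no assets against a debt of $1$); hence the very first update is a genuine branching choice between these two nodes, and I would show that after this choice the remaining order is forced, exactly as in Theorem~\ref{th:earlydef_rev}.

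In the first case I would let $v_1$ default first, to tentative rate $\frac{3}{4}$, with the freezing rule fixing its incident CDS in reference to $v_2$ at $0$. Defaulting to $\frac{3}{4}$ raises the CDS source $\to v_2$ (of weight $4$) to payment $4\cdot(1-\frac34)=1$, which is exactly enough to keep $v_2$ solvent. Because $v_2$ survives, the outgoing CDS of $u_1$ in reference to $v_2$ never acquires a liability, so $u_1$ also survives, and consequently $w$'s weight-$6$ CDS in reference to $u_1$ stays at $0$; thus $w$ remains solvent and keeps paying $v_1$ the full amount $3$. The only node then forced to update is $u_2$, whose default does not affect $v_1$, so $v_1$ stabilises at $a_{v_1}=3$, $l_{v_1}=4$, i.e.\ $r_{v_1}=\frac{3}{4}$.

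In the second case I would let $v_2$ default first (any ordering with $v_1$ not first reduces to this, as $v_1$ and $v_2$ are the only initial candidates). Now $v_2\to 0$ forces $u_1$'s CDS in reference to $v_2$ to full liability, so $u_1$ must default; this drives $w$'s weight-$6$ CDS in reference to $u_1$ up to $6$, forcing $w$ to default and cutting its payment to $v_1$ from $3$ down to $3\cdot\frac13=1$. Crucially, until this cascade reaches $v_1$ it is still solvent ($a_{v_1}=4=l_{v_1}$) and hence not even updatable, so it has no chance to lock in a good rate; only once $w$'s payment drops does $v_1$'s asset fall to $2$ against its debt of $4$, forcing it to default and freeze at $r_{v_1}=\frac{1}{2}$. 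The structural point is that $v_2$'s default is precisely what ignites the damaging $u_1\to w$ cascade, whereas $v_1$ defaulting first keeps $v_2$ alive and averts it; this is the asymmetry the differing weights (the $1$ versus $4$ on the branching and the weight-$6$ amplifier at $w$) are engineered to produce.

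The main obstacle I anticipate is bookkeeping rather than conceptual: I must verify that in each case the sequence of updatable nodes is uniquely determined, so that no alternative ordering beats $\frac{3}{4}$ or rescues $v_1$ in the second case, and that the tentative rates $\overline{r}$ computed under the \emph{optimistic} rule agree with the naive $a/l$ values used above. The latter requires only a brief check that, at each defaulting step, redirecting solvent banks' contracts to the sink does not change the equilibrium of the bank being updated, which holds here because every node feeding $v_1$, $v_2$, $u_1$ or $w$ is either the $\infty$-source or a node whose relevant payment is already determined. Once these two routine verifications are in place, the comparison $\frac{3}{4}>\frac{1}{2}$ completes the argument.
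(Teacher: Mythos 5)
Your proposal is correct and follows essentially the same route as the paper: the same construction (Figure~\ref{fig:earlydef_mon}), the same two-case analysis on whether $v_1$ or $v_2$ updates first, and the same final rates $\tfrac{3}{4}$ versus $\tfrac{1}{2}$, with the intermediate steps ($u_2$ defaulting in the first case; the $u_1 \to w \to v_1$ cascade with $r_w=\tfrac{1}{3}$ in the second) matching the paper's computation. The extra verifications you flag (uniqueness of the forced ordering and agreement of the optimistic tentative rates with the naive $a/l$ values) do check out and are left implicit in the paper.
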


\begin{proof}

Consider the system in Figure \ref{fig:earlydef_mon}, where either $v_1$ or $v_2$ can execute the first update. We analyze the defaulting strategies of bank $v_1$ in this system.

Assume that $v_1$ is the first to execute a step, announcing $r_{v_1}=\frac{3}{4}$. This gives new assets to $v_2$ (resulting in $\overline{r_{v_2}}=1$), and new liabilities to $u_2$ (resulting in $\overline{r_{u_2}}=0$). The next update can only be executed by $u_2$, setting $r_{u_2}=0$; at this point, the system stabilizes.

On the other hand, assume that $v_2$ first announces $r_{v_2}=0$. This provides $\overline{r_{v_1}}=1$ and $\overline{r_{u_1}}=0$, so as a next step, $u_1$ will announce a default. However, this results in new liabilities for $w$, so as a next step, $w$ has to update to $r_w=\frac{1}{3}$. With this, $v_1$ only has $2$ assets altogether, so $v_1$ must announce $r_{v_1}=\frac{1}{2}$. Hence $v_1$ achieves a lower recovery rate in the final state if it is not the first bank to announce a default.

Note that with some further modifications, we can also make the example symmetric to ensure that both $v_1$ and $v_2$ are motivated to be the first one to default. \qedhere
\end{proof}

Finally, one might also wonder if the monotone model allows an analogous result to Theorem \ref{th:findbesttime}, i.e. a hardness result on finding the best defaulting strategy of a single bank. However, note that the simple formulation of Theorem \ref{th:findbesttime} was possible due to the fact that the proof construction only allowed one possible ordering in the rest of the system.

If we were to introduce a similar setting in the monotone model, then the banks could always find the best outcome in polynomial time, since the sequence can only last for $O(n^2)$ steps. As such, in the monotone model, we can only expect similar hardness results for more complex formulations of this problem, such as finding the best defaulting time with respect to, e.g., the best-case or worst-case ordering of the remaining banks in the system.


\bibliography{references}

\newpage

\begin{appendices}

\section{Binary counter construction} \label{App:A}

In this section, we describe the binary counter construction that proves Theorem \ref{th:binary_counter}.

\subsection{Stable bit gadget}

One of the basic building blocks of this construction is the so-called \textit{stable bit gadget}, shown in Figure \ref{fig:stable_base}; we have already applied the base idea of this gadget in the proof of Theorem \ref{th:earlydef_rev}. The gadget consists of two nodes $v_1$ and $v_2$, which have an outgoing CDS in reference to each other. Note that if some external condition sets $r_{v_1}$ to $0$, then this results in $r_{v_2}=0$, even if we had $r_{v_2}=1$ before. Similarly, if we change to $r_{v_1}=1$, then this results in $r_{v_2}=1$, even if we had $r_{v_2}=0$ before.

The key property of this gadget is that it allows us to ensure that banks $v_1$ and $v_2$ remain in a specific state. Assume that we initially have $v_1$ and $v_2$ in the state $r_{v_1}=r_{v_2}=1$, and some event (i.e. the default of an external node) creates another liability for $v_1$. This leads to $r_{v_1}=0$, and hence $r_{v_2}=0$. However, after this point, even if the extra liabilities for $v_1$ are removed, the banks $v_1$ and $v_2$ do not return to their initial recovery rate, but remain in this new state of $r_{v_1}=r_{v_2}=0$ instead.

\begin{figure}[b]
\centering
\minipage{0.32\textwidth}
\centering

\begin{tikzpicture}
	
	\draw[very thick, brown, arrows=-latex] (0pt,25pt) -- (43pt,3.5pt);
	\draw[very thick, brown, arrows=-latex] (0pt,-25pt) -- (43pt,-3.5pt);
	
	\node[anchor=center] at (27pt,19pt) {\small $v_2$};
	\node[anchor=center] at (27pt,-20pt) {\small $v_1$};
	
	\draw[black, fill=white] (0pt,25pt) circle (8.5pt);
	\draw[black, fill=white] (0pt,-25pt) circle (8.5pt);
	\draw[black, fill=white] (50pt,0pt) circle (8.5pt);
	
	\node[anchor=center] at (0.5pt,24.5pt) {\large $v_1$};
	\node[anchor=center] at (0.5pt,-25.5pt) {\large $v_2$};

\end{tikzpicture}
	\vspace{4pt}
	\caption{Stable bit gadget.}
	\label{fig:stable_base}
\endminipage\hfill
\hspace{0.03\textwidth}
\minipage{0.57\textwidth}
\centering
	\vspace{9pt}

\begin{tikzpicture}
	
	\draw[very thick, brown, arrows=-latex] (0pt,0pt) -- (33pt,17pt);
	\draw[very thick, brown, arrows=-latex] (0pt,-0pt) -- (33pt,-17pt);
	\draw[very thick, brown, arrows=-latex] (40pt,20pt) -- (55pt,26pt) -- (80pt,26pt) -- (92pt,21pt);
	\draw[very thick, brown, arrows=-latex] (40pt,20pt) -- (55pt,14pt) -- (80pt,14pt) -- (92pt,19pt);
	\draw[very thick, brown, arrows=-latex] (40pt,-20pt) -- (55pt,-26pt) -- (80pt,-26pt) -- (92pt,-21pt);
	\draw[very thick, brown, arrows=-latex] (40pt,-20pt) -- (55pt,-14pt) -- (80pt,-14pt) -- (92pt,-19pt);
	
	\node[anchor=center] at (17pt,15pt) {\small $z_1$};
	\node[anchor=center] at (17pt,-15pt) {\small $z_1$};
	\node[anchor=center] at (67.5pt,30.5pt) {\small $v_2$};
	\node[anchor=center] at (67.5pt,9pt) {\small $z_2$};
	\node[anchor=center] at (67.5pt,-8.5pt) {\small $v_1$};
	\node[anchor=center] at (67.5pt,-31pt) {\small $z_2$};
	
	\draw[black, fill=white] (100pt,20pt) circle (8pt);
	\draw[black, fill=white] (100pt,-20pt) circle (8pt);
	\draw[black, fill=white] (40pt,20pt) circle (8pt);
	\draw[black, fill=white] (40pt,-20pt) circle (8pt);
	\draw[black, fill=white] (0pt,0pt) circle (8pt);
	
	\node[anchor=center] at (40.5pt,19.5pt) {\large $v_1$};
	\node[anchor=center] at (40.5pt,-20.5pt) {\large $v_2$};
	
	\draw [fill=white] (2pt,-5pt) rectangle (11pt,-12pt);
	\node[anchor=center] at (6.5pt,-8.5pt) {\scriptsize $\infty$};
	
\end{tikzpicture}
	\vspace{2pt}
	\caption{Resettable version of the stable bit gadget (the sink node is split into two for a cleaner topology).}
	\label{fig:stable_reset}
\endminipage\hfill
\end{figure}

Hence if we add conditional assets and liabilities to both nodes of the gadget, then activating these contracts will allow us to flip the state of the bit to $0$ or $1$ as required, and then the gadget will store this state until the next such activation. More specifically, consider the extended version of the gadget in Figure \ref{fig:stable_reset}. The default state of the external nodes $z_1$, $z_2$ is $r_{z_1}=1$ and $r_{z_2}=1$; in this case, $v_1$ and $v_2$ are in the same situation as in Figure \ref{fig:stable_base}, so they retain their current recovery rates. However, if we set $r_{z_1}=0$ and $r_{z_2}=1$, then this allows us to set the gadget to $r_{v_1}=r_{v_2}=1$, regardless of its previous state. Similarly, if we have $r_{z_1}=1$ and $r_{z_2}=0$, then this allows us to set $r_{v_1}=r_{v_2}=0$, regardless of the previous state. Our construction ensures that after each such operation, the external nodes $z_1, z_2$ are returned to their default state $r_{z_1}=r_{z_2}=1$.

Thus the gadget essentially acts as a memory cell for storing a single bit, which is modifiable through the recovery rates of external banks. Our main construction will use such gadgets to store the current bits of the binary counter, and apply these external operations to increment the counter to the next value.

\subsection{States and conditions}

In order to ensure that the bits are changed in the correct order for the incrementation, we create another set of gadgets that capture the current state of the counting process, and allows us to control the transitions between these states.

First, note that CDSs essentially allow us to describe specific conditions, and ensure that an event only happens if these conditions are fulfilled. Assume that we have some nodes $z_1, ..., z_c$ and $z'_1, ..., z'_d$, which are all `binary nodes' in the sense that the system guarantees that they always have a recovery rate of either $0$ or $1$. Let us first analyze the left-hand component of the system in Figure \ref{fig:state} separately; we will refer to this building block as the \textit{condition gadget}. In this gadget, node $u$ has incoming CDSs in reference to banks $z_1, ..., z_c$, and outgoing CDSs in reference to banks $z'_1, ..., z'_d$, and an outgoing debt of weight $c$. Furthermore, we have another node $w$ with a liability of $1$, and an incoming CDS in reference to $u$ which has a weight of $c+1$.

The key property of this gadget is that it only allows $r_w=0$ if $r_{z_1}=...=r_{z_c}=0$ and $r_{z'_1}=...=r_{z'_d}=1$. That is, if all these conditions are fulfilled, then $u$ has $c$ assets and $c$ liabilities, thus $r_u=1$ and $r_w=0$. However, if any of the nodes $z_i$ have $r_{z_i}=1$, then $u$ has at most $c-1$ assets and $r_u \leq \frac{c-1}{c}$. This ensures that $w$ receives a payment of at least $\frac{1}{c} \cdot (c+1) \geq 1$, thus avoiding default with $r_w=1$. Similarly, if any of the nodes $z'_i$ have $r_{z'_i}=0$, then $u$ has at least $c+1$ liabilities and $r_u \leq \frac{c}{c+1}$. This again means that $w$ gets a payment of at least $\frac{1}{c+1} \cdot (c+1) = 1$, thus ensuring $r_w=1$ again.

Hence this gadget allows us to select a set of binary banks, and ensure that $w$ only goes into default if each of these banks have the desired recovery rate. This allows us to control the transitions between a set of states, only permitting entry into a state if a set of conditions are fulfilled.

\begin{figure}
\centering

\begin{tikzpicture}
	
	\draw[very thick, brown, arrows=-latex] (50pt,0pt) -- (70pt,4pt) -- (123pt,4pt) -- (143pt,1pt);
	\draw[very thick, brown, arrows=-latex] (50pt,0pt) -- (70pt,12pt) -- (123pt,12pt) -- (144pt,3.5pt);
	\draw[very thick, brown] (50pt,0pt) -- (70pt,-4pt) -- (90pt,-4pt);
	\draw[very thick, brown, arrows=-latex] (110pt,-4pt) -- (123pt,-4pt) -- (143pt,-1pt);
	\draw[very thick, brown, arrows=-latex] (50pt,0pt) -- (70pt,-12pt) -- (123pt,-12pt) -- (144pt,-3.5pt);
	
	\draw[very thick, brown, arrows=-latex] (150pt,0pt) -- (170pt,4pt) -- (223pt,4pt) -- (243pt,1pt);
	\draw[very thick, brown, arrows=-latex] (150pt,0pt) -- (170pt,12pt) -- (223pt,12pt) -- (244pt,3.5pt);
	\draw[very thick, brown] (150pt,0pt) -- (170pt,-4pt) -- (190pt,-4pt);
	\draw[very thick, brown, arrows=-latex] (210pt,-4pt) -- (223pt,-4pt) -- (243pt,-1pt);
	\draw[very thick, brown, arrows=-latex] (150pt,0pt) -- (170pt,-12pt) -- (223pt,-12pt) -- (244pt,-3.5pt);
	
	\draw[very thick, blue, arrows=-latex] (150pt,0pt) -- (170pt,-28pt) -- (223pt,-28pt) -- (248pt,-7pt);
	
	\draw[very thick, brown, arrows=-latex] (50pt,0pt) -- (50pt,-50pt) -- (142pt,-50pt);
	\draw[very thick, blue, arrows=-latex] (150pt,-50pt) -- (250pt,-50pt) -- (250pt,-8pt);
	
	\node[anchor=center] at (90pt,16pt) {\footnotesize $z_1$};
	\node[anchor=center] at (100pt,7.5pt) {\footnotesize $z_2$};
	\node[anchor=center] at (100pt,-4pt) {\large ...};
	\node[anchor=center] at (120pt,-16pt) {\footnotesize $z_c$};
	
	\node[anchor=center] at (190pt,17pt) {\footnotesize $z'_1$};
	\node[anchor=center] at (200pt,8.5pt) {\footnotesize $z'_2$};
	\node[anchor=center] at (200pt,-3pt) {\large ...};
	\node[anchor=center] at (220pt,-15pt) {\footnotesize $z'_d$};
	
	\node[anchor=center] at (100pt,-43pt) {\small $(c+1) \, u$};
	
	\node[anchor=center] at (197pt,-33pt) {\small $c$};
	
	\draw[black, fill=white] (50pt,0pt) circle (8.5pt);
	\draw[black, fill=white] (150pt,0pt) circle (8.5pt);
	\draw[black, fill=white] (250pt,0pt) circle (8.5pt);
	
	\draw[black, fill=white] (150pt,-50pt) circle (8.5pt);
	
	\node[anchor=center] at (150pt,0pt) {\large $u$};
	\node[anchor=center] at (150pt,-50pt) {\large $w$};
	
	\draw [fill=white] (52pt,-5pt) rectangle (61pt,-12pt);
	\node[anchor=center] at (56.5pt,-8.5pt) {\scriptsize $\infty$};
	
	
	\draw[very thick, brown, arrows=-latex] (320pt,0pt) -- (335pt,6pt) -- (360pt,6pt) -- (372pt,1pt);
	\draw[very thick, brown, arrows=-latex] (320pt,0pt) -- (335pt,-6pt) -- (360pt,-6pt) -- (372pt,-1pt);
	\draw[very thick, brown, arrows=-latex] (320pt,-40pt) -- (380pt,-40pt) -- (380pt,-8pt);
	
	\node[anchor=center] at (347.5pt,11pt) {\small $w$};
	\node[anchor=center] at (347.5pt,-12pt) {\small $v_2$};
	\node[anchor=center] at (347.5pt,-35pt) {\small $v_1$};
	
	\draw[black, fill=white] (380pt,0pt) circle (8pt);
	\draw[black, fill=white] (320pt,0pt) circle (8pt);
	\draw[black, fill=white] (320pt,-40pt) circle (8pt);
	
	\node[anchor=center] at (320.5pt,-0.5pt) {\large $v_1$};
	\node[anchor=center] at (320.5pt,-40.5pt) {\large $v_2$};

\end{tikzpicture}
	\caption{State gadget, obtained as the combination of a condition gadget (left) and a variant of the stable bit gadget (right).}
	\label{fig:state}
\end{figure}
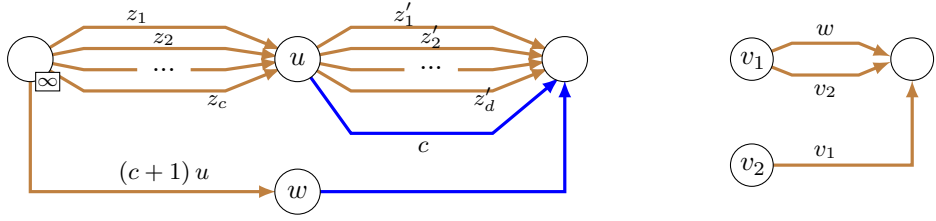

We can combine this condition gadget with a stable bit gadget to obtain our \textit{state gadget} as shown in Figure \ref{fig:state}. The state gadget ensures that a specific set of conditions are fulfilled before allowing $w$ to default. This then sets the stable bit to $0$, representing the fact that the execution is currently in this state; we can then use bank $v_2$ as a reference entity in the CDSs of other condition gadgets to make some events dependent on the condition that we are currently in this state. Once we exit the state, we can use the technique shown in Figure \ref{fig:stable_reset} to ensure that the state bit is set back to $1$ again. Also, note that since the bit is stable, once we enter the state, the bit remains set to $0$ until it is artificially reset with this technique, even if the entry condition of the state becomes false in the meantime (and thus $w$ updates back to $r_w=1$).

Hence the state gadgets will allow us to represent the execution of the counting process through a set of states and transitions between these states, with the next state always depending on our current state and possibly the value of some other stable bits in the system (the bits of the counter, in our current case). By defining the appropriate conditions (including a specific previous state) for each state, we can ensure that the only valid ordering of the system is an execution that follows these prescribed transitions between the states.

\subsection{Resetting the states}

In order to guarantee that the system is only in one state at any point in time, we also have to ensure that the resetting operations are indeed executed after exiting a state (i.e. that $r_{v_1}$ and $r_{v_2}$ is indeed updated to $1$). Due to this, encoding the transition from a state $s_1$ to another state $s_2$ becomes a nontrivial task. The natural approach would be to enforce the resetting of $s_1$ by including these updates in the entry condition of $s_2$. However, recall that the entry condition of $s_2$ also requires that $s_1$ is active (as a preceding state), so this makes the entry condition of $s_2$ contradictory, hence impossible to fulfill. On the other hand, after entering state $s_2$, there is no straightforward way to verify the resetting of $s_1$ anymore; furthermore, the system already has two active states at once in this case.

In order to solve this problem, we take each state $s$ of our original system design, and replace it by three consecutive state gadgets $\textsc{entry}_s$, $\textsc{reset}_s$ and $\textsc{exit}_s$, known as the \textit{entry phase}, \textit{resetting phase} and \textit{exit phase} of $s$. State $\textsc{entry}_s$ will have the same entry conditions as the original state $s$ did, and any other state that was previously following state $s$ will now follow after state $\textsc{exit}_s$. The entry condition for states $\textsc{reset}_s$ and $\textsc{exit}_s$ will be that we are currently in states $\textsc{entry}_s$ and $\textsc{reset}_s$, respectively. Thus instead of passing through state $s$, the execution will pass through all $3$ phases of $s$ in this predefined order in our modified system.

The key idea is that the three classes of states will reset each other in a round-robin fashion throughout the execution. State $\textsc{reset}_s$ will provide new assets to the nodes $v_1$ and $v_2$ of state $\textsc{entry}_s$, thus resetting their recovery rate to $1$. Then state $\textsc{exit}_s$ will have it as an entry condition that the value of the banks $w$, $v_1$ and $v_2$ of state $\textsc{entry}_s$ are all set to $1$. This ensures that $\textsc{exit}_s$ is indeed only reached when all recovery rates in $\textsc{entry}_s$ are set back to their initial value. This way we can make sure that when the execution leaves $\textsc{exit}_s$ and enters the state $\textsc{entry}_{s'}$ of the following state $s'$, then the state $s$ is not considered active anymore.

In order to ensure that $\textsc{reset}_s$ and $\textsc{exit}_s$ are also reset to inactive, we execute the same steps in any two succeeding states for both of them. That is, we ensure that $\textsc{exit}_s$ will provide new assets to reset the stable bit of $\textsc{reset}_s$ (in the same fashion that $\textsc{reset}_s$ does this for $\textsc{entry}_s$), and we ensure that in any original state $s'$ succeeding $s$, the state $\textsc{entry}_{s'}$ has in its entry condition that banks $w$, $v_1$ and $v_2$ of state $\textsc{reset}_s$ are all set to $1$ (in the same fashion that $\textsc{exit}_s$ does this for $\textsc{entry}_s$). Similarly, in order to ensure that $\textsc{exit}_s$ is reset to inactive, we take every original state $s'$ succeeding $s$, and in $\textsc{entry}_{s'}$ we provide new assets to reset the stable bit of $\textsc{exit}_s$, while we require in the entry condition of $\textsc{reset}_{s'}$ that banks $w$, $v_1$ and $v_2$ of $\textsc{exit}_s$ are set to $1$.

Hence by representing each logical state by three consecutive state gadgets, we can ensure that any ordering of updates is indeed forced to reset each state to inactive when leaving the state and entering the following one. Note that this method results in a higher number of entry conditions for each of our state gadgets, but this has no effect on the overall construction. Furthermore, we point out that the reason to have at least three such classes is to avoid the situation when a state gadget investigates its own banks in its entry condition, which could lead to undesired behavior.

\subsection{Technical details of managing states}

While this already describes the general technique of using state gadgets, there are still several details to discuss for completeness.

One such example is the handling of bank $w$ in the state gadget: note that this bank is slightly differently from $v_1$ and $v_2$ in the sense that it does not receive extra assets to be reset, but its resetting is still checked as a condition in $\textsc{exit}_s$. This is because when state $\textsc{reset}_s$ is reached, then the exit state $\textsc{exit}_{\hat{s}}$ of the previous state $\hat{s}$ has already been reset to inactive, which means that the entry conditions of $\textsc{entry}_s$ are not fulfilled anymore. This allows us to update $r_u$ to a new value of $r_u \leq \frac{c}{c+1}$, which then allows us to set $r_w=1$, and indeed enter $\textsc{exit}_s$. Hence by not explicitly resetting $w$ but still checking in $\textsc{exit}_s$ that $w$ is reset, we can ensure that bank $u$ of state $s$ is also reset to its initial state of $r_u \leq 1$, and thus the $s$ state can only be reactivated if the entry conditions are fulfilled again at some point.

Furthermore, note that while our construction mostly requires us to implement logical `and' relations in the entry conditions of state gadgets, we occasionally also have to implement a logical `or'. One such case is the central state of our binary counter which will have multiple different preceding states, i.e. there are multiple states that finish by enabling this state as the next one; to activate this state, we only require one of the preceding states to be active, and not all of them. In this specific case, it is rather simple to insert this `or' condition into our state gadget: we simply add an incoming CDS in reference to each of these preceding states, but we still select the weight of the input CDS of $w$ based on the original $c$ value, i.e. as if there was only one preceding state. This way $u$ receives a payment of $1$ if we are in any of these preceding states (and the remaining conditions are fulfilled), and since we can not have two active states at the same time, these extra CDSs will always only result in a payment of $0$ or $1$ for $u$ altogether. In a more general setting (e.g. if we want to encode different further conditions for different predecessor states), we can create a separate transition state for each such condition, and then use the same method to set these transition states as predecessors.

Finally, our analysis has so far assumed that inactive states gadgets have recovery rates of $r_u<1$ and $r_w=1$, and we discussed how we can maintain this invariant throughout the process. However, we also have to ensure this when initializing the construction; since we initially begin with $r_u=1$, the node $w$ of any state gadget could already report a default in the initial time step, even though the entry conditions of the state are not satisfied.

For this initialization step (i.e. achieving $r_u<1$ in each state gadget), we introduce a special node $y_0$ whose default indicates that the system has already been initialized, i.e. that $r_u<1$ in each state gadget. Then we slightly modify the state gadgets such that the outgoing debt from node $w$ is replaced by a CDS in reference to $y_0$; this way no $w$ can update before all the banks $u$ are initialized, but after we set $r_{y_0}=0$, the recovery rate of $y_0$ will never change, and thus the outgoing contract of each node $w$ will behave as a single debt for the rest of the process.

To ensure this behavior, it suffices to add a starter node $y_1$ with an outgoing CDS in reference to each $u$, and carefully choose $e_{y_1}$ such that $y_1$ only goes into default if each bank $u$ has executed an update. Then we can include $y_0$ in a stable bit gadget with another node $y'_0$, with $y'_0$ also having an outgoing CDS in reference to $y_1$. This system can only begin with all the state gadget nodes $u$ executing an update. The slight default of $u$ then also sends $y'_0$, and then $y_0$ into default; since $y_0$ has no way to reverse this, it will remain in default indefinitely. We can then use the default of $y_0$ as the trigger condition for the first real state of our counting process.

\subsection{Overall construction}

Given the tools to represent states, bits and conditions, the construction of the binary counter becomes straightforward. Let us introduce a parameter $k$ such that $k=\Theta(n)$. We create $k$ distinct stable bit gadgets that represent the $k$ bits of a counter, which will count from $0$ to $2^k-1$. We also add a set of state gadgets which control the counting process. More specifically, we add an \textsc{idle} state to capture the state between two consecutive incrementations. Furthermore, for each bit of the counter (i.e. each $i \in \{ 1, ..., k \}$), we add two states that describe the incrementation of the $i^{\text{th}}$ bit, and we call them \textsc{enable}$_i$ and \textsc{done}$_i$. The states of our process are illustrated in Figure \ref{fig:counter_states}.

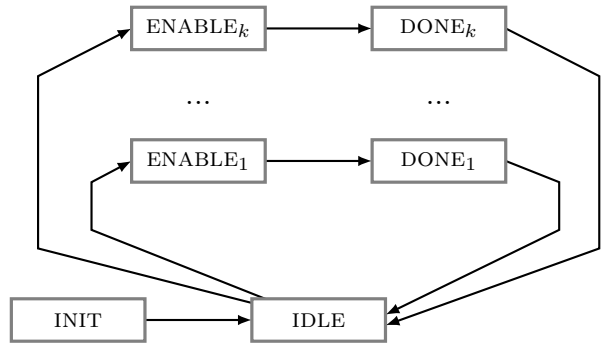
\begin{figure}
\centering

\begin{tikzpicture}
	
	\draw[thick, black, arrows=-latex] (50pt,8pt) -- (90pt,8pt);
	
	\draw[thick, black, arrows=-latex] (95pt,68pt) -- (135pt,68pt);
	\draw[thick, black, arrows=-latex] (95pt,118pt) -- (135pt,118pt);
	
	\draw[thick, black, arrows=-latex] (115pt,8pt) -- (30pt,42pt) -- (30pt,60pt) -- (45pt,68pt);
	\draw[thick, black, arrows=-latex] (115pt,8pt) -- (10pt,35pt) -- (10pt,100pt) -- (45pt,118pt);
	
	\draw[thick, black, arrows=-latex] (185pt,68pt) -- (205pt,60pt) -- (205pt,42pt) -- (140pt,10pt);
	\draw[thick, black, arrows=-latex] (185pt,118pt) -- (220pt,100pt) -- (220pt,35pt) -- (140pt,6pt);
	
	\node[anchor=center] at (70pt,90pt) {\Large ...};
	\node[anchor=center] at (160pt,90pt) {\Large ...};
	
	\draw [gray, very thick, fill=white] (0pt,0pt) rectangle (50pt,16pt);
	\node[anchor=center] at (25pt,8pt) {\normalsize \textsc{init}};
	
	\draw [gray, very thick, fill=white] (90pt,0pt) rectangle (140pt,16pt);
	\node[anchor=center] at (115pt,8pt) {\normalsize \textsc{idle}};
	
	\draw [gray, very thick, fill=white] (45pt,60pt) rectangle (95pt,76pt);
	\node[anchor=center] at (70pt,68pt) {\normalsize \textsc{enable}$_1$};
	
	\draw [gray, very thick, fill=white] (45pt,110pt) rectangle (95pt,126pt);
	\node[anchor=center] at (70pt,118pt) {\normalsize \textsc{enable}$_k$};
	
	\draw [gray, very thick, fill=white] (135pt,60pt) rectangle (185pt,76pt);
	\node[anchor=center] at (160pt,68pt) {\normalsize \textsc{done}$_1$};
	
	\draw [gray, very thick, fill=white] (135pt,110pt) rectangle (185pt,126pt);
	\node[anchor=center] at (160pt,118pt) {\normalsize \textsc{done}$_k$};

\end{tikzpicture}
	\caption{Illustration of the main states of our binary counter system.}
	\label{fig:counter_states}
\end{figure}

For \textsc{enable}$_i$, the entering condition is that the process is currently in the \textsc{idle} state, and that this is indeed a valid next incrementation of the counter, i.e. that the bits at positions $1, ..., i-1$ are all set to $1$, and the bit at position $i$ is set to $0$. When entering \textsc{enable}$_i$, we ensure that this state sets the bits at positions $1, ..., i-1$ to $0$, and it sets the bit at position $i$ to $1$. The entering condition of the \textsc{done}$_i$ state is that all of these updates are indeed executed, and thus the counter is indeed correctly incremented. From the \textsc{done}$_i$ state, we lead the execution back to the \textsc{idle} state without any condition.

Thus the financial system indeed has essentially only one possible ordering, aside from the fact that we are free to choose the order of updating the bits of the counter in each incrementation. In this single ordering, the system works as a binary counter: in the idle state, the only valid next step is to always execute the next incrementation on the counter. Since the construction consists of only $O(k)$ different gadgets, each having only $O(1)$ nodes, this indeed allows for a choice of $k=\Theta(n)$, and thus the counting process indeed lasts for at least $2^{\Omega(n)}$ steps (note that this also holds if we only count defaulting steps, i.e. when a bank $v$ updates from $r_v=1$ to $r_v<1$). Once all the bits are set to $1$, there is no next state that the process can enter from the \textsc{idle} state, so the system indeed stabilizes eventually.

As a technical detail, note that we must also ensure that the process begins in the \textsc{idle} state. This can be ensured by adding a further state \textsc{init} and a further stable bit gadget with this state. The state \textsc{init} can be entered if this stable bit is set to $1$ (and the initialization node $y_0$ has already defaulted), so it will be the only state that the process can enter in the beginning. We allow the process to also enter the \textsc{idle} state from \textsc{init}. However, within the \textsc{init} state, this stable bit is set to $0$, and the system does not provide a way for this stable bit to ever be reset to $1$; hence the \textsc{init} state cannot ever be entered again, and thus it plays no role after this point.

Furthermore, note that for the simplest implementation, we can consider the counter bits to be $1$ when the nodes of the stable bit are in default, and $0$ when they are not in default: this ensures that the initial state of the system (when every bank has a recovery rate of $1$) is indeed a valid initialization of our construction. Otherwise (if we want the $0$ bits to be represented by defaulting bit gadgets), we can use further initial states to ensure that each stable bit is initialized to the desired value before the process first enters the \textsc{idle} state.

Recall that, as discussed before, each state in our description will in fact be split to three consecutive states to ensure that resetting is always executed. However, this only increases the number of state gadgets in our system by a constant factor, and thus it has no effect on our analysis.

\section{Best defaulting strategy for a single bank} \label{App:B}

In this section we prove the claim of Theorem \ref{th:findbesttime}, i.e. that finding the best defaulting strategy (the best time to report a default) for a single bank is an NP-hard problem. We combine the binary counter construction of Appendix \ref{App:A} with the MAXSAT reduction technique to show that any efficient algorithm that finds the best time to report a default would also provide an efficient solution to MAXSAT.

\subsection{Overall idea}

The main idea of our construction is to create a binary counter system where each counter bit represents one of the variables of our input MAXSAT formula. The counting process then corresponds to enumerating all the $2^k$ possible value assignments to the variables.

We then add a further node $v$ to this system that wants to find the best time to report its own default. This bank $v$ will only have an opportunity to report a default in the \textsc{idle} state of the counter, i.e. exactly once for each of the $2^k$ possible assignments. When $v$ reports a default, this will immediately stop the counting process, thus fixing the value of the $k$ stable bits to their current value forever. Then for each clause of the formula, we add a clause variable that defaults exactly if at least one of the bits corresponding to the literals in the clause are set to true (but only after the counting has stopped). Finally, we ensure that for each such clause node, bank $v$ receives a unit of payment through an incoming CDS.

This results in a construction where, by choosing a time to report its default, $v$ can essentially select a value assignment to the variables, and the final amount of assets received by $v$ will be determined by the number of satisfied clauses under this assignment. Thus selecting the best defaulting time for $v$ is equivalent to selecting the best assignment for MAXSAT, which completes the reductions.

Note that the behavior of the binary counter system is completely predictable, since it only has essentially one valid ordering of updates, but it is still NP-hard to find the optimal defaulting time for $v$. This implies similar hardness results in more general systems that have very different orderings: it is still NP-hard to find the best defaulting time if we, for example, assume that the remaining part of the systems follows the ordering that is the most/least beneficial for $v$.

Furthermore, we note that in order to simplify our clause gadgets, we can easily extend the binary counter construction by a negated version of each of the $k$ stable bits, which are similarly set and checked in the \textsc{enable}$_i$ and \textsc{done}$_i$ states. This step essentially provides another counter that is counting backwards from $2^k-1$ to $0$ simultaneously to our original counter, without having any effect on the magnitude of the number of nodes. More importantly, in our case, it provides a convenient access to the negation of each of variable; with this, we can directly check the value of any literal in the clauses of the formula.

\subsection{Technical details}

Consider the bank $v$ for which we want to find the best defaulting strategy. In order to ensure that $v$ can only report a default when the counter is in the \textsc{idle} state, we simply add an outgoing CDS of weight $1$ to $v$ (and select $e_v=0$). Then similarly to the design of a state gadget, we connect $v$ to a stable bit gadget on banks $w_1$ and $w_2$ such that the default of $v$ will lead to the default of both $w_1$ and $w_2$ (i.e. $e_{w_1}=e_{w_2}=0$, and $w_1$ has outgoing CDSs in reference to $v$ and $w_2$, while $w_2$ has an outgoing CDS in reference to $w_1$). Then $w_1$ and $w_2$ can never return from this default; this will ensure that even after $v$ receives extra assets in the future, the counting still does not continue.

More specifically, the outgoing CDS of $v$ is in reference to bank $v_2$ of the $\textsc{entry}_{\textsc{idle}}$ state; since this bank defaults every time when the \textsc{idle} state is visited, $v$ indeed has the opportunity to report a default at each of the $2^k$ counting phases. Then in the $\textsc{exit}_{\textsc{idle}}$ state, we add it as a condition that both $r_v=1$ and $r_{w_2}=1$; this ensures that after $v$ defaults, the counter can never enter the $\textsc{exit}_{\textsc{idle}}$ state again, so the counting indeed stops at the current value.

Furthermore, for each literal of the formula (i.e. each variable and its negation), we create a node that defaults in this final state if the literal is set to true. That is, given a literal $\ell_i$ (a variable or its negated version), we add a bank $\ell_i$ representing this literal. This bank $\ell_i$ has an outgoing CDS in reference to $w_2$, and an incoming CDS in reference to the stable bit gadget in the counter that represents the negated version of $\ell_i$. This implies that (i) the bank $\ell_i$ can only go into default once $v$ has reported a default and the counter was stopped, and (ii) in this case, it goes into default exactly if the literal $\ell_i$ is set to true in the chosen assignment.

Then for each clause $c_i$ of the formula, we simply add a bank representing $c_i$, and draw an outgoing CDS from $c_i$ for all the banks $\ell_i$ that correspond to a literal included in the clause. As such, $c_i$ is in default exactly if at least one of the literals in the clause is true.

Finally, for each such clause node $c_i$, we add an incoming CDS to $v$ in reference to this bank $c_i$. With this, the incoming assets of $v$ equal the number of satisfied clauses under the chosen assignment. Furthermore, we add another incoming CDS in reference to $w_2$ in order to compensate for the outgoing CDS that allows $v$ to default. With this, the total payoff of $v$ in the final state (the difference of its assets and liabilities) is indeed equal to the number of satisfied clauses in the formula. Hence the best outcome for $v$ is indeed the assignment where the maximal possible number of clauses are satisfied, which is NP-hard to find.

Hence whenever $v$ reports a default, the connected stable bit is set to $0$, and the counting stops permanently. After this point, the literal gadgets corresponding to true literals will have to report a default eventually, followed by the appropriate clause gadgets. Hence the system will indeed eventually stabilize in the state where $v$ receives all the payments for the clause gadgets, so its assets in the final state are indeed defined by the quality of the MAXSAT assignment. This completes our reduction.

\section{Financial networks as a model of computation} \label{App:TM}

We now make a detour to briefly discuss how financial systems can behave as a model of computation. While this is not very relevant for the analysis of real-world financial networks, it is still an interesting aspect of our reversible model from a theoretical perspective.

First of all, note that we have only considered financial networks with finitely many banks; as such, in our base model of these systems, we cannot hope to model a general Turing machine (TM) with an infinitely long tape. Therefore, we only discuss how our networks can model a Turing machine which only has a finitely long tape (also known as a linear bounded automaton). We point out that generalizing our constructions to the infinite case would not be as straightforward as to simply allow infinitely many banks and contracts in the network. For example, in our binary counter or in the TM simulation design below, this generalization would lead to infinitely many state gadgets, and the initialization of these gadgets would already require infinitely many updates in the beginning, thus not allowing the main part of the process to begin after a finite amount of steps.

Recall that the main tools for simulating a TM have already been introduced in the binary counter construction: state machines were explicitly discussed and used in the counter, and the stable bit gadgets are a natural candidate to simulate the tape cells of a TM over a binary alphabet. Note that for a very simple and crude encoding of the TM, the stable bit gadgets are not even needed: since a TM with a finite tape can only have finitely many valid configurations (in terms of current state, tape content and tape pointer position), we can encode the transitions between these configurations as a finite automaton, and build this state machine using our state gadgets.

However, a much more elegant way of modeling a Turing machine with our systems is to indeed use stable bits as tape cells, and encode an addressing mechanism on the tape. That is, besides our financial subsystem representing the state machine part of the TM, we also create a binary counter which stores the position of the TM pointer of the tape; when the pointer is moved to the left or the right in a transition from one state to another, we simply increment or decrement the value of this counter.

Then in an auxiliary state following the transition, we can use the value of the counter and the content of the tape to copy the content of the currently chosen tape cell to a specific stable bit gadget. That is, for each cell $c$ of the tape, we have two specific states \textsc{read}$_{c,0}$ and \textsc{read}$_{c,1}$ that are only entered as a next step if the counter value currently points to $c$; state \textsc{read}$_{c,0}$ is activated if stable bit gadget of $c$ is currently set to $0$, while \textsc{read}$_{c,1}$ is activated if $c$ is set to $1$. We use this extra state to copy the content of the cell to a specific stable bit gadget, and then we only make our next transition in the state machine based on the current logical state and the value of this single stable bit (as in case of a Turing machine).

We can use a similar technique for overwriting the value in the current cell: we add two auxiliary writing states \textsc{write}$_{c,0}$ and \textsc{write}$_{c,1}$ for each cell $c$, and based on the value of the counter and the bit we want to write, only one of these states gets activated. This state then copies the desired bit value to the corresponding stable bit gadget of the tape.

Note that this more sophisticated simulation method still requires a separate state for each cell of the tape. However, for a state machine of $s$ states and an available tape of length $m$, this approach can be implemented with $O(s)$ banks in the state machine, $O(\log{m})$ banks in the counter, and $O(m)$ banks for the stable bit gadgets and auxiliary states of the tape cells; in contrast to this, the crude approach has a $O(s \cdot m)$ factor in the total number of banks. Even more importantly, this simulations method allows us keep the banks modeling the state machine and the banks modeling the tape separately, thus providing a much cleaner representation of the Turing machine in our systems.

\section{A note on further possible sequential models} \label{App:modelling}

In the paper, we have studied two different sequential models of our financial networks: the reversible model (which allows banks to return from a default if they acquire new assets) and the monotone model (which guarantees an eventual stabilization for any ordering in any network). Since both the reversibility of defaults and an eventual stabilization of the network are realistic properties in real-world financial systems, it is a natural idea to try to obtain an even more accurate sequential model by appropriately combining these two settings. For example, the financial framework could ensure that the liability freezing rule is applied on a bank if, for example, it already has to report a default for the third time. Alternatively, a financial authority could actively monitor the network to look for infinite cycling patterns, and enforce liability freezing in specifically chosen situations. We leave it to future work to explore a more complex (and possibly more realistic) line of models in this direction.

Note that our discussion of sequential models is not exhaustive; there are various further changes we can execute to model the sequential process slightly differently. While these alternative models may come with some convenient properties, the changes often also introduce new undesired side effects into the model.

For a simple example of such an alternative model, assume that the financial industry is aware of the heavily dynamic behavior of payment obligations on CDSs in these networks, and thus the authorities decide to measure the assets of a bank by estimating an expected incoming payment on CDSs. That is, we select a global constant $\mu \in [0,1]$, and given a CDS of weight $\delta$ from $u$ to $v$ (in reference to $w$), we define the incoming assets of bank $v$ on this CDS as $\mu \cdot \delta \cdot r_u$, regardless of the (dynamically changing an possibly inaccurate) current value of $r_w$. 

One clear disadvantage of this slightly simpler model is that defaults can easily remain undetected in the system. E.g. in the branching gadget of Figure \ref{fig:branch}, a choice of $\mu=1$ will mean that neither of the two banks ever report a default, since they can both fulfill their obligations in a possible best-case situation. On the other hand, the survival of both $u$ and $v$ can never be an equilibrium, since neither of them receives any assets, and thus it remains undetected that either $u$ or $v$ should be in default in any `reasonable' outcome. Similarly, a choice of $\mu=0$ will force both $u$ and $v$ to report a default with $r_u=r_v=0$, even though this is not a realistic outcome in the network. While the choice of $\mu \in (0,1)$ seems like a reasonable compromise, it can actually lead to both of these problems (both undetected and false defaults) in practice, and as such, it does not allow an accurate analysis of the network.

\end{appendices}

\end{document}